\documentclass{amsart}
\usepackage {amsmath, amssymb, color, textcomp}
\addtolength{\hoffset}{-0.7cm}
\addtolength{\textwidth}{1.2cm}
\addtolength{\textheight}{0.3cm}
\makeatletter

\newtheorem{thm}{Theorem}[section]
\newtheorem{cor}[thm]{Corollary}
\newtheorem{lem}[thm]{Lemma}
\newtheorem{defn}[thm]{Definition}
\newtheorem{prop}[thm]{Proposition}

\newcommand{\propref}[1]{Proposition~\ref{#1}}

\newcommand{\lemref}[1]{Lemma~\ref{#1}}
\newcommand{\eqnref}[1]{~(\ref{#1})}

\newtheorem{preremark}[thm]{Remark}
\newenvironment{remark}%
  {\begin{preremark}\upshape}{\end{preremark}}
\newtheorem{preexample}[thm]{Example}
  {\begin{preexample}\upshape}{\end{preexample}}

\numberwithin{equation}{section}

\numberwithin{equation}{section}

\numberwithin{thm}{section}
\newtheorem{lem/defn}[thm]{Lemma/Definition}
\newtheorem{preex/defn}[thm]{Example/Definition}
\newenvironment{ex/defn}%
  {\begin{preex/defn}\upshape}{\end{preex/defn}}


\newcommand{\ten}{\otimes}

\newcommand{\abs}[1]{\lvert#1\rvert}

\DeclareMathOperator{\End}{End}

\begin{document}

\title[Representations of $a_{\infty}$ and  $d_{\infty}$ with central charge 1 on the Fock space $\mathit{F^{\ten \frac{1}{2}}}$]{Representations of $a_{\infty}$ and  $d_{\infty}$  with central charge 1 on the single neutral fermion Fock space $\mathit{F^{\ten \frac{1}{2}}}$}
\author{Iana I. Anguelova, Ben Cox,  Elizabeth Jurisich}

\address{Department of Mathematics,  College of Charleston,
Charleston SC 29424, USA }
\email{anguelovai@cofc.edu, coxbl@cofc.edu, jurische@cofc.edu}


\begin{abstract}
We construct a new representation of the infinite rank Lie algebra $a_{\infty}$ with central charge $c=1$ on the Fock space $\mathit{F^{\ten \frac{1}{2}}}$ of a single neutral fermion. We show that  $\mathit{F^{\ten \frac{1}{2}}}$ is a direct sum of  irreducible integrable highest weight modules for $a_{\infty}$ with central charge $c=1$. We prove that as $a_{\infty}$ modules $\mathit{F^{\ten \frac{1}{2}}}$ is isomorphic to the Fock space $\mathit{F^{\ten 1}}$ of the charged free fermions. As a corollary we obtain the decompositions of  certain irreducible highest weight modules for $d_{\infty}$ with central charge $c=\frac{1}{2}$ into irreducible highest weight modules for $d_{\infty}$ with central charge $c=1$.
\end{abstract}

\maketitle

\section{Introduction}
\label{sec:intro}
Our motivation for this paper was to better understand the  various boson-fermion correspondences and their connection with the representation theory of  certain infinite dimensional Lie algebras. The first  to write on the topic of the relationship between a boson-fermion correspondence and representation theory were  Date, Jimbo, Kashiwara, Miwa in \cite{DJKM-1}, \cite{DJKM3}  and I. Frenkel in \cite{Frenkel-BF}. Since then many attempts have been made to understand the boson-fermion correspondences as nothing else but an isomorphism of infinite dimensional Lie algebra modules.
In his seminal paper I. Frenkel wrote: ``The Boson-Fermion correspondence is nothing else but the
\textbf{canonical} isomorphism between two realizations of the same representation
of the affine Lie algebra $\hat{D}(2l)$ and in particular of its subalgebra $\widehat{gl}(l)$" (page 317 of \cite{Frenkel-BF}). In the language that became commonly used later many had translated this to mean that the boson-fermion correspondence of type A is \textbf{just} an isomorphism between the vertex (the bosonic) and the spinor (the fermionic)  realizations of the standard modules of $a_{\infty}=\widehat{gl}_{\infty}$ (the label ``type A" derives from the $a_{\infty}=\widehat{gl}_{\infty}$, and is intended to distinguish this correspondence from the boson-fermion correspondence of type B for example, see e.g. \cite{DJKM6}, \cite{AngTVA}, \cite{Ang-Varna2}). This point of view of course had to be amended, as the charged free fermion Fock space $\mathit{F^{\ten 1}}$ underlying the fermionic side of the boson-fermion correspondence of type A  is actually an infinite  direct sum of irreducible  standard modules of $a_{\infty}=\widehat{gl}_{\infty}$ (for details on $\mathit{F^{\ten 1}}$  see e.g. \cite{Frenkel-BF}, \cite{KacRaina}, \cite{Kac}, \cite{WangDual}, as well as Remark \ref{remark:F^1} in this paper). Starting with I. Frenkel's work in \cite{Frenkel-BF}, and later, the boson-fermion correspondence of type A  (and the correspondence of type B) was related to different kinds of Howe-type dualities (\cite{WangKac}, \cite{WangDual}, \cite{WangDuality}). For instance, in \cite{WangDual} Wang wrote that  ``the $(GL_1, \hat{D})$ -duality in Theorem 5.3 is essentially the celebrated boson-fermion correspondence" ($\hat{D}$ denotes the universal central extension of the Lie algebra of differential operators on the circle, sometimes also labeled by $W_{1+\infty}$).
But what we contend is that the boson-fermion correspondences are more than just  isomorphisms between certain Lie algebra modules: a boson-fermion correspondence is first and foremost an isomorphism between two different   chiral field theories, one fermionic (expressible in terms of free fermions and their descendants), the other bosonic (expressible in terms  of exponentiated bosons). In fact,  as I. Frenkel was careful to summarize in Theorem II.4.1 of his very influential paper \cite{Frenkel-BF}, ``the canonical"
isomorphism of the two $o(2l)$-current algebra modules in the bosonic and the fermionic Fock spaces \textbf{follows} from the boson-fermion correspondence (in fact that is what makes the isomorphism canonical), but not vice versa.
But what we will show is that although the isomorphism of Lie algebra modules (and indeed the various dualities) follow
from a boson-fermion correspondence, the isomorphism as Lie algebra modules is  \textbf{not equivalent} to a
boson-fermion correspondence. To do that we consider a \textbf{single} neutral fermion Fock space  $\mathit{F^{\ten \frac{1}{2}}}$ and show that as modules for the Lie algebra $a_{\infty}$,  $\mathit{F^{\ten \frac{1}{2}}}\cong \mathit{F^{\ten 1}}$. Of course, it is known that even as super vertex algebras, and certainly as  modules for the Lie algebras $a_{\infty}$, $d_{\infty}$ with central charge $c=1$, as well as other affine Lie algebras,  $\mathit{F^{\ten 1}}\cong \mathit{F^{\ten \frac{1}{2}}}\ten \mathit{F^{\ten \frac{1}{2}}}$. This fact is often and extensively used in many papers on vertex algebras, and it was once again  I. Frenkel who used it first in \cite{Frenkel-BF} in connection to representation theory. But, the representations of  $a_{\infty}$, $d_{\infty}$ and other affine algebras on $\mathit{F^{\ten 1}}\cong \mathit{F^{\ten \frac{1}{2}}}\ten \mathit{F^{\ten \frac{1}{2}}}$  that are known in the literature do not restrict to representations on each of the $\mathit{F^{\ten \frac{1}{2}}}$ factors. What was known is that $\mathit{F^{\ten \frac{1}{2}}}$ is a representation of the Lie algebra $d_{\infty}$ with central charge $c=\frac{1}{2}$ (this is one of the explanations for the label $\frac{1}{2}$ in $\mathit{F^{\ten \frac{1}{2}}}$, the other being that $\mathit{F^{\ten \frac{1}{2}}}$ is only a ``half-infinite" Fock space, as opposed to $\mathit{F^{\ten 1}}$).
This is then what we do in this paper: First, we build a fermionic (spinor) representation of $a_{\infty}$ with central charge $c=1$ on $\mathit{F^{\ten \frac{1}{2}}}$. Next we show how this representation decomposes into irreducible highest weight modules, which ultimately shows that as modules for the Lie algebra $a_{\infty}$ with central charge $c=1$, $\mathit{F^{\ten \frac{1}{2}}}\cong \mathit{F^{\ten 1}}$.
This shows that it is not the $a_{\infty}$-module structure that distinguishes these spaces---$\mathit{F^{\ten \frac{1}{2}}}$ and $\mathit{F^{\ten 1}}$ are identical as vector spaces, or even as $a_{\infty}$ Lie algebra modules.  The difference is in the  vertex algebra  structure  (field theory) on $\mathit{F^{\ten \frac{1}{2}}}$, versus the vertex algebra structure on $\mathit{F^{\ten 1}}$.  The field theory on $\mathit{F^{\ten 1}}$ is local in the usual sense (at $z=w$, or as we can refer to it, 1-point local, see Definition \ref{defn:parity}); or more precisely $\mathit{F^{\ten 1}}$ has a \textbf{super vertex algebra} structure (see e.g. \cite{Kac}, \cite{LiLep}, \cite{FZvi} for a precise definition of a super vertex algebra). On the other hand, even though $\mathit{F^{\ten \frac{1}{2}}}$ has a super vertex algebra structure, this super vertex algebra structure is not enough to produce the new representations that we  obtain below-- to do that we  at the minimum  need to introduce 2-point locality (i.e., the fields we consider on $\mathit{F^{\ten \frac{1}{2}}}$ are allowed to be multi-local,  at both $z=w$ and $z=-w$). More precisely,  there is a \textbf{twisted vertex algebra} structure on $\mathit{F^{\ten \frac{1}{2}}}$ (see \cite{AngTVA}, \cite{ACJ} for a precise definition of a twisted vertex algebra). This shows that the type of vertex algebra structure on $\mathit{F^{\ten 1}}$ versus $\mathit{F^{\ten \frac{1}{2}}}$ is of great importance, in particular the set of points of locality is a necessary part of the data describing any boson-fermion correspondence.

The outlay  of the paper is as follows. First, we recall the necessary definitions and technical tools in Section \ref{section:background}. In Section \ref{section:main} we introduce the Fock space $\mathit{F^{\ten \frac{1}{2}}}$ and its different gradings, and the infinite rank Lie algebras that we will work with. Next we show that $\mathit{F^{\ten \frac{1}{2}}}$ is a module for the Lie algebra $a_{\infty}$ with central charge $c=1$, and by restriction for the Lie algebra $d_{\infty}$ with central charge $c=1$. We show that each homogeneous component of $\mathit{F^{\ten \frac{1}{2}}}$ is a highest weight module for $a_{\infty}$ with central charge $c=1$, which is moreover irreducible. That allows us to show that $\mathit{F^{\ten \frac{1}{2}}}$ is completely reducible and to obtain its decomposition in terms of irreducible modules for $a_{\infty}$ with central charge $c=1$. Hence we can compare and conclude that as  $a_{\infty}$ modules with central charge $c=1$  $\mathit{F^{\ten \frac{1}{2}}}\cong \mathit{F^{\ten 1}}$. Finally as a corollary we obtain the decomposition of  certain $c=\frac{1}{2}$ modules for $d_{\infty}$  in terms of irreducible highest weight  modules for $d_{\infty}$ with central charge $c=1$.

\section{Notation and background}
\label{section:background}

We work over the field of complex numbers $\mathbb{C}$.

The mathematical definitions of a field in a chiral quantum field theory and normal ordered products of fields are well known, they can be found for instance in \cite{FLM}, \cite{FHL},  \cite{Kac}, \cite{LiLep} and others,  we include them for completeness:
\begin{defn}
\begin{bf} (Field)\end{bf}\label{defn:field-fin}
 A field $a(z)$ on a vector space $V$ is a series of the form
\begin{equation}
a(z)=\sum_{n\in \mathbf{Z}}a_{(n)}z^{-n-1}, \ \ \ a_{(n)}\in
\End(V), \ \ \text{such that }\ a_{(n_v)}v=0 \ \ \text{for any}\ v\in V, \ n_v\gg 0.
\end{equation}
\end{defn}
Denote
\begin{equation}
a(z)_-:=\sum_{n\geq 0}a_nz^{-n-1},\quad a(z)_+:=\sum_{n<0}a_nz^{-n-1}.
\end{equation}
\begin{defn}(\cite{ACJ})  \label{defn:parity} {\bf ($N$-point local fields) }
Let $\epsilon$ be a primitive $N$th root of unity. We say that a field $a(z)$ on a vector space $V$ is {\bf even} and $N$-point self-local at $1, \epsilon, \epsilon^2, \dots, \epsilon^{N-1}$,  if there exist $n_0, n_1, \dots  , n_{N-1}\in \mathbb{Z}_{\geq 0}$ such that
\begin{equation}
(z- w)^{n_{0}}(z-\epsilon w)^{n_{1}}\cdots (z-\epsilon^{N-1} w)^{n_{N-1}}[a(z),a(w)] =0.
\end{equation}
In this case we set the {\bf parity} $p(a(z))$ of $a(z)$ to be $0$.
\\
We set $\{a, b\}:  =ab +ba$.We say that a field $a(z)$ on $V$ is $N$-point self-local at $1, \epsilon, \epsilon^2, \dots, \epsilon^{N-1}$
and {\bf odd} if there exist $n_0, n_1, \dots , n_{N-1}\in \mathbb{Z}_{\geq 0}$ such that
\begin{equation}
(z- w)^{n_{0}}(z-\epsilon w)^{n_{1}}\cdots (z-\epsilon^{N-1} w)^{n_{N-1}}\{a(z),a(w)\}=0.
\end{equation}
In this case we set the {\bf parity} $p(a(z))$ to be $1$. For brevity we will just write $p(a)$ instead of $p(a(z))$.\\
Finally,  if $a(z), b(z)$ are fields on $V$, we say that $a(z)$ and $b(z)$ are {\it $N$-point mutually local} at $1, \epsilon, \epsilon^2, \dots, \epsilon^{N-1}$
if there exist $n_0, n_1, \dots , n_{N-1} \in \mathbb{Z}_{\geq 0}$ such that
\begin{equation}
(z- w)^{n_{0}}(z-\epsilon w)^{n_{1}}\cdots (z-\epsilon^{N-1} w)^{n_{N-1}}\left(a(z)b(w)-(-1)^{p(a)p(b)}b(w)a(z)\right)=0.
\end{equation}
\end{defn}
\begin{defn} \label{defn:normalorder}{\bf (Normal ordered product)}
Let $a(z), b(z)$ be fields on a vector space $V$. Define
\begin{equation}
:a(z)b(w):=a(z)_+b(w)+(-1)^{p(a)p(b)}b(w)a_-(z).
\end{equation}
One calls this the ``normal ordered product" of $a(z)$ and $b(w)$.
\end{defn}
\begin{remark}
Let  $a(z), b(z)$ be any fields on a vector space $V$. Then \\
$:a(z)b(\lambda z):$ and $:a(\lambda z)b( z):$ are well defined fields on $V$ for any $\lambda \in \mathbb{C}^*$.
\end{remark}
For a rational function $f(z,w)$,  with poles only at $z=0$,  $z=\epsilon^i w, \ 0\leq i\leq N-1$, we denote by $i_{z,w}f(z,w)$
the expansion of $f(z,w)$ in the region $\abs{z}\gg \abs{w}$ (the region in the complex $z$ plane outside of all  the points $z=\epsilon^i w, \ 0\leq i\leq N-1$), and correspondingly for
$i_{w,z}f(z,w)$.
The mathematical background of the well-known and often used (both in physics and in mathematics) notion of Operator Product Expansion (OPE) of product of two fields for the case of usual locality ($N=1$) has been established for example in \cite{Kac}, \cite{LiLep}.
The following lemma extended the mathematical background  to the case of  $N$-point locality:
\begin{lem} (\cite{ACJ}) {\bf (Operator Product Expansion (OPE) of $N$-point local fields)}\label{lem:OPE}
 Let $a(z)$, $b(w)$ be {\it $N$-point mutually local} fields on a vector space $V$. Then exists fields $c_{jk}(w)$, $j=0, \dots, N-1; k=0, \dots , n_j-1$, such that we have
 \begin{equation}
 \label{eqn:OPEpolcor}
 a(z)b(w) =i_{z, w} \sum_{j=0}^{N-1}\sum_{k=0}^{n_j-1}\frac{c_{jk}(w)}{(z-\epsilon^j w)^{k+1}} + :a(z)b(w):.
 \end{equation}
We call the fields $c_{jk}(w)$, $j=0, \dots, N-1; k=0, \dots , n_j-1$ OPE coefficients. We will write the above OPE as
 \begin{equation}
 a(z)b(w) \sim  \sum_{j=0}^{N-1}\sum_{k=0}^{n_j-1}\frac{c_{jk}(w)}{(z-\epsilon_j w)^{k+1}}.
 \end{equation}
 \end{lem}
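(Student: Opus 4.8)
The plan is to strip off the normal ordered product first and reduce everything to a statement about the ``contraction'' of the two fields. Subtracting $:a(z)b(w):$ from \eqnref{eqn:OPEpolcor} and using Definition~\ref{defn:normalorder}, the identity to be proved is equivalent to
\begin{equation*}
C(z,w):=a(z)b(w)-:a(z)b(w):=a(z)_-b(w)-(-1)^{p(a)p(b)}b(w)a(z)_-=i_{z,w}\sum_{j=0}^{N-1}\sum_{k=0}^{n_j-1}\frac{c_{jk}(w)}{(z-\epsilon^jw)^{k+1}}.
\end{equation*}
The first thing I would record is that, because $a(z)_-$ contains only strictly negative powers of $z$ and $a(z)_-v$ is a finite sum for every $v\in V$ by the truncation condition of Definition~\ref{defn:field-fin}, the contraction $C(z,w)$ is a well-defined series carrying only negative powers of $z$, with coefficients that are fields in $w$. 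Dually, the opposite contraction
\begin{equation*}
C'(z,w):=(-1)^{p(a)p(b)}b(w)a(z)-:a(z)b(w):=(-1)^{p(a)p(b)}b(w)a(z)_+-a(z)_+b(w)
\end{equation*}
carries only nonnegative powers of $z$, since $a(z)_+$ does; and the right-hand side of the target identity, being the $i_{z,w}$-expansion of a proper rational function in $z$, also carries only negative powers of $z$, so the comparison is consistent.

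Next I would feed in $N$-point locality. By Definition~\ref{defn:parity} there is a polynomial $f(z,w)=\prod_{j=0}^{N-1}(z-\epsilon^jw)^{n_j}$, of degree $M:=\sum_{j}n_j$ in $z$, annihilating the super-commutator $a(z)b(w)-(-1)^{p(a)p(b)}b(w)a(z)=C(z,w)-C'(z,w)$; hence
\begin{equation*}
f(z,w)\,C(z,w)=f(z,w)\,C'(z,w)=:g(z,w).
\end{equation*}
Now a power count in $z$ pins down $g$: the left-hand side has $z$-powers at most $M-1$ (being the product of a polynomial of $z$-degree $M$ with a series in strictly negative powers of $z$), while the right-hand side has only nonnegative powers of $z$ (the constant term of $f$ in $z$ is $\prod_j(-\epsilon^jw)^{n_j}\ne0$, so $f$ has no negative $z$-powers, and $C'$ has none either). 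Therefore $g(z,w)$ must be a genuine polynomial in $z$ of degree at most $M-1$, with coefficients that are fields in $w$.

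The conclusion is then quick. Multiplication by $f(z,w)$ is injective on the space of series in only negative powers of $z$ (a leading-coefficient recursion, since $f$ is monic of degree $M$ in $z$), so $C$ is the unique such series with $f\,C=g$; but $i_{z,w}(g/f)$ is also a series in only negative powers of $z$ satisfying $f\cdot i_{z,w}(g/f)=g$, so $C(z,w)=i_{z,w}(g/f)$. Since $\deg_z g\le M-1<M=\deg_z f$ the fraction $g/f$ is proper, and because $\epsilon$ is a primitive $N$th root of unity the poles $z=\epsilon^jw$, $0\le j\le N-1$, are pairwise distinct for $w\ne0$; hence a partial-fraction expansion in $z$ gives
\begin{equation*}
\frac{g(z,w)}{f(z,w)}=\sum_{j=0}^{N-1}\sum_{k=0}^{n_j-1}\frac{c_{jk}(w)}{(z-\epsilon^jw)^{k+1}},
\end{equation*}
with the $c_{jk}(w)$ obtained as iterated residues $\Res_z$ at $z=\epsilon^jw$. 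Applying $i_{z,w}$ reproduces exactly \eqnref{eqn:OPEpolcor}, and one checks from these residue formulas, written in terms of the modes of $a$ and $b$, that each $c_{jk}(w)$ is again a field.

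I expect the main obstacle to be making the power-counting rigorous at the level of formal distributions rather than honest rational functions: one must verify that $C$ and $C'$ genuinely have the claimed one-sided $z$-expansions when applied to an arbitrary $v\in V$, and that multiplication by $f(z,w)$ is injective together with the well-definedness of the formal division $i_{z,w}(g/f)$. The hypothesis that $\epsilon$ is primitive enters decisively only at the partial-fraction step, where any collision among the poles $\epsilon^jw$ would break the decomposition.
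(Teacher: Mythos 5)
You should know at the outset that this paper contains no proof of Lemma~\ref{lem:OPE} at all: the lemma is imported verbatim, with citation, from \cite{ACJ}, so the only comparison available is with the argument in that reference (which follows the classical $N=1$ scheme of Kac). Judged on its own, your proof is correct. The skeleton --- reduce to the contraction $\lfloor ab\rfloor=[a(z)_-,b(w)]$, which carries only negative $z$-powers; observe the opposite contraction $C'$ carries only nonnegative ones; use locality to get $fC=fC'=g$ with $f=\prod_{j=0}^{N-1}(z-\epsilon^jw)^{n_j}$; conclude by power counting that $g$ is a $z$-polynomial of degree at most $M-1$; then identify $C=i_{z,w}(g/f)$ by injectivity of multiplication by the $z$-monic polynomial $f$ on series with only negative $z$-powers; finally expand $g/f$ by partial fractions, where primitivity of $\epsilon$ keeps the poles $\epsilon^jw$ distinct --- is sound at every step, and your closing paragraph correctly locates the points needing formal care. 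Where you differ from the cited proof is in the mechanism: the standard route (in \cite{ACJ}, generalizing Kac's $N=1$ argument) first proves a decomposition theorem for $N$-point local formal distributions, namely that anything killed by $f(z,w)$ is a finite sum $\sum_{j,k}c_{jk}(w)\,\partial_w^{k}\delta(z-\epsilon^jw)/k!$ with $c_{jk}(w)=\Res_z\bigl((z-\epsilon^jw)^{k}\,[a(z),b(w)]_{\pm}\bigr)$-type coefficients, and then extracts the OPE by splitting each $\delta(z-\epsilon^jw)$ into its $i_{z,w}$ and $i_{w,z}$ expansions, the contraction being exactly the $i_{z,w}$ part. Your version trades the delta-function classification for a direct uniqueness argument plus partial fractions; this is arguably cleaner in that $\delta$ never appears, but it hides the explicit residue formulas for the $c_{jk}$ that the delta-function route produces for free. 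The one place you assert rather than prove is that the $c_{jk}(w)$ are fields. That check is routine but should be written out, since it is where the field (truncation) hypothesis on $a$ and $b$ enters: for fixed $v$, the series $C(z,w)v$ has $w$-powers bounded below uniformly in the $z$-power (the term $a(z)_-b(w)v$ inherits the bound from $b(w)v$, and $b(w)a(z)_-v$ involves only the finitely many nonzero vectors $a_{(m)}v$, $m\geq 0$); this bound survives multiplication by the polynomial $f$ and the bounded monomial shifts $w^{\,i-M+k+1}$ occurring in the homogeneous partial-fraction coefficients, which gives the truncation condition for each $c_{jk}(w)$.
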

  The $\sim $ signifies that we have only written the singular part, and also we have omitted writing explicitly the expansion $i_{z, w}$, which we do acknowledge  tacitly. Often also the following notation is used for short:
 \begin{equation}\label{contraction}
\lfloor
ab\rfloor=a(z)b(w)-:a(z)b(w):= [a(z)_-,b(w)],
\end{equation}
i.e.,  the {\it contraction} of any two fields
$a(z)$ and $b(w)$ is in fact also the $i_{z, w}$ expansion of the singular part of the OPE of the two fields $a(z)$ and $b(w)$.

 The OPE expansion of the product of two fields is very convenient, as it completely determines in a very compact manner the commutation relations between the modes of the two fields, and we will use it extensively in what follows. In particular, extending of the OPEs to the case of N-point local fields allows us to extend and use Wick's Theorem for N-point local fields:
 \begin{thm}[Wick's Theorem, \cite{MR85g:81096}, \cite{MR99m:81001} or
\cite{Kac} ]  Let  $a^i(z)$ and $b^j(z)$ be {\it $N$-point mutually local} fields on a vector space $V$,
 satisfying
\begin{enumerate}
\item $[ \lfloor a^i(z)b^j(w)\rfloor ,c^k(x)_\pm]=[ \lfloor
a^ib^j\rfloor ,c^k(x)_\pm]=0$, for all $i,j,k$ and
$c^k(x)=a^k(z)$ or
$c^k(x)=b^k(w)$.
\item $[a^i(z)_\pm,b^j(w)_\pm]=0$ for all $i$ and $j$.
\end{enumerate}
Then
\begin{align*}
:&a^1(z)\cdots a^M(z)::b^1(w)\cdots
b^N(w):= \\
  &\sum_{s=0}^{\min(M,N)}\sum_{\substack{i_1<\cdots<i_s,\\
j_1\neq \cdots \neq j_s}}\pm \lfloor a^{i_1}b^{j_1}\rfloor\cdots
\lfloor a^{i_s}b^{j_s}\rfloor
:a^1(z)\cdots a^M(z)b^1(w)\cdots
b^N(w):_{(i_1,\dots,i_s;j_1,\dots,j_s)}.
\end{align*}
Here the subscript
${(i_1,\dots,i_s;j_1,\dots,j_s)}$ means that those factors $a^i(z)$,
$b^j(w)$ with indices
$i\in\{i_1,\dots,i_s\}$, $j\in\{j_1,\dots,j_s\}$ are to be omitted from
the product
$:a^1\cdots a^Mb^1\cdots b^N:$ and when $s=0$ we do not omit any factors.
The plus or minus sign is determined as follows:  each permutation of an adjacent odd field changes the sign.
\end{thm}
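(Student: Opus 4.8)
The plan is to reduce the assertion to the classical ($N=1$) Wick combinatorics by noticing that hypotheses (1) and (2) are precisely what force every contraction $\lfloor a^i b^j\rfloor$ to behave as a central object: by (1) it commutes with the $\pm$-parts of all the remaining fields $c^k(x)$, and by (2) the creation/annihilation pieces of the $a$'s commute among themselves and likewise for the $b$'s. Consequently, when the product of the two normal-ordered strings is expanded, the only nontrivial interactions that can occur are the pairwise contractions $\lfloor a^i b^j\rfloor$—no nested or higher contractions survive—so the combinatorial identity is formally identical to the usual one. The $N$-point locality enters only through \lemref{lem:OPE}: it alters the analytic shape of a single contraction (which may now carry poles at $z=\epsilon^j w$, not merely at $z=w$) but not the algebra that drives the proof.

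First I would record the two-field case, which is just the contraction formula \eqref{contraction}: $a^1(z)b^1(w)=\lfloor a^1 b^1\rfloor + \, :a^1(z)b^1(w):$, matching the $M=N=1$ instance of the stated formula. Next I would establish the single-field engine: for one field $a(z)$ against a normal-ordered string,
\[
a(z):b^1(w)\cdots b^N(w): \; = \; :a(z)\,b^1(w)\cdots b^N(w): \;+\; \sum_{j=1}^{N}(\pm)\,\lfloor a\,b^j\rfloor\; :b^1(w)\cdots\widehat{b^j(w)}\cdots b^N(w):.
\]
This is proved by splitting $a(z)=a(z)_+ + a(z)_-$ according to Definition~\ref{defn:normalorder}: the $a(z)_+$ piece already sits in normal-ordered position, while $a(z)_-$ must be transported to the far right. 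Each time $a(z)_-$ crosses a factor $b^j(w)$ it produces, by the definition of the contraction $\lfloor a\,b^j\rfloor=[a(z)_-,b^j(w)]$ (respectively the anticommutator in the odd case), exactly one contraction term; hypothesis (2) guarantees that $a(z)_-$ passes the already-moved creation pieces without generating extra terms, and hypothesis (1) lets each resulting contraction be pulled out to the front. A short induction on $N$ closes this step.

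Then I would iterate. Using the right-nested recursive definition of the iterated normal-ordered product, $:a^1(z)\cdots a^M(z): \; = \; :a^1(z)\bigl(:a^2(z)\cdots a^M(z):\bigr):$, I apply the single-field engine to peel off $a^1, a^2, \dots$ one at a time from the left, at each stage replacing the interaction of the peeled field with the $b$-string by its sum of single contractions and collecting the surviving normal-ordered remainders. Running this induction on $M$ and bookkeeping the increasing subset $\{i_1<\cdots<i_s\}$ of contracted $a$-indices against the (ordered but not necessarily increasing) $b$-indices $\{j_1,\dots,j_s\}$ reproduces exactly the double sum in the statement, with the omitted factors matching the subscript notation $:a^1\cdots a^M b^1\cdots b^N:_{(i_1,\dots,i_s;\,j_1,\dots,j_s)}$.

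The hard part will be the sign bookkeeping in the presence of odd fields. One must check that the sign generated when $a(z)_-$—or a whole peeled field—is carried past intervening odd factors, combined with the signs implicit in the normal-ordering convention of Definition~\ref{defn:normalorder}, agrees term by term with the prescription that each transposition of an adjacent odd field changes the sign. This requires verifying that the contracted factors are removed with the correct parity and that the contractions $\lfloor a^{i_\ell} b^{j_\ell}\rfloor$, which may themselves be odd, are reassembled in the stated order without sign ambiguity. Everything else is the standard Wick induction, left unchanged by the passage from $1$-point to $N$-point locality.
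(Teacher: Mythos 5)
Your proposal is correct in outline, but note that there is no proof in the paper to compare it against: Wick's theorem is quoted as a known result from the cited references (Bogoliubov--Shirkov, Huang, Kac) and is used as a black box exactly once, in the proof of Proposition~\ref{prop:r_1}, where hypotheses (1) and (2) are verified for the fields $\phi^{\pm DA}$. What you have written is essentially a reconstruction of the standard inductive proof found in those references: the contraction formula~\eqref{contraction} as the base case; the single-field lemma proved by induction on $N$ by splitting $a(z)=a(z)_++a(z)_-$, with hypothesis (2) ensuring that the only surviving interaction as $a(z)_-$ is transported to the right is $[a(z)_-,b^j(w)_+]=\lfloor ab^j\rfloor$, and hypothesis (1) letting each such contraction be pulled out to the front; then induction on $M$ through the right-nested definition of the iterated normal ordered product. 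Your structural observation --- that $N$-point locality changes only the analytic shape of each contraction via Lemma~\ref{lem:OPE}, not the algebra driving the proof, because hypotheses (1) and (2) are purely commutator conditions --- is exactly right, and it is precisely what justifies the authors' citing the classical $1$-point references without modification. Two caveats: your opening paragraph misparaphrases hypothesis (2), which asserts vanishing of the cross-commutators $[a^i(z)_\pm,b^j(w)_\pm]$ between the $a$'s and the $b$'s, not that the $a$'s (or the $b$'s) commute among themselves --- though your later, operative use of it in the single-field step is the correct one; and the Koszul sign verification you defer is the one step of a complete proof you have not actually supplied, routine though it is.
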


\section{The Fock space  $\mathit{F^{\ten \frac{1}{2}}}$ and representations of  $a_{\infty}$ and $d_{\infty}$ with central charge 1}
\label{section:main}

 We recall the definitions and notation for the Fock space $\mathit{F^{\ten \frac{1}{2}}}$ and the double-infinite rank Lie algebras $a_{\infty}$ and  $d_{\infty}$ as in \cite{Frenkel-BF}, \cite{DJKM3}, \cite{Kac-Lie}, \cite{WangDuality}; in particular we follow the notation of \cite{WangDuality}, \cite{WangDual}.

Consider a single odd self-local field $\phi ^D(z)$, which we index in the form $\phi ^D(z)=\sum _{n\in \mathbb{Z}+\frac{1}{2}} \phi^D_n z^{-n-\frac{1}{2}}$.
The OPE of $\phi ^D(z)$ is given by
\begin{equation}
\label{equation:OPE-D}
\phi ^D(z)\phi ^D(w)\sim \frac{1}{z-w}.
\end{equation}
This OPE completely determines the commutation relations between the modes $\phi^D_n$, $n\in \mathbb{Z} +\frac{1}{2}$:
\begin{equation}
\label{eqn:Com-D}
\{\phi^D_m,\phi^D_n\}=\phi^D_m\phi^D_n + \phi^D_n\phi^D_m=\delta _{m, -n}1.
\end{equation}
and so the modes generate a Clifford algebra $\mathit{Cl_D}$. The field $\phi ^D(z)$ is usually  called a ``neutral fermion field".  Now $Cl_D$ has basis consisting of $1$ and  the products $\phi_{i_1}^D\phi_{i_2}^D \cdots\phi_{i_k}^D$ where $i_1< i_2<  \cdots< i_k$, $i_j \in \mathbb Z +1/2$. We introduce a $\mathbb Z$-grading $dg$ on $Cl_D$ by defining the following degree of a basis element:$$dg(1) = 0,$$
\begin{equation*}
 dg(\phi^D_{n_k-\frac{1}{2}}\dots \phi^D_{n_2-\frac{1}{2}}\phi^D_{ n_1-\frac{1}{2}})
 =\#\{i=1, 2, \dots, k|\  n_i=\text{odd}\}\,  -\#\{i=1, 2, \dots, k|\  n_i=\text{even}\}.
 \end{equation*}
\begin{lem}\label{grading}
The $\mathbb Z$-grading of $Cl_D$ is an algebra grading. Furthermore, the operation left multiplication by an element $\phi_n^D\in CL_D$ for any $n\in \mathbb Z +1/2$ is a homogenous operator on $CL_D$, of degree $1$ if $n = 2k+1/2$, and of degree $-1$  if $n = 2k - 1/2$  for some integer $k$.
 \end{lem}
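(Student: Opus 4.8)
The plan is to obtain the grading on $Cl_D$ by pushing forward a grading on the free (tensor) algebra through the defining Clifford relations, the point being that those relations turn out to be homogeneous. First I would assign to each generator the degree
\[
dg(\phi_m^D)=
\begin{cases}
+1, & m=2k+\tfrac12,\\
-1, & m=2k-\tfrac12,
\end{cases}
\]
for $k\in\mathbb{Z}$, and extend it additively to the tensor algebra $T$ on the span of the $\phi_m^D$. On an ordered monomial $\phi^D_{n_k-\frac{1}{2}}\cdots\phi^D_{n_1-\frac{1}{2}}$ this additive degree is exactly $\#\{i:n_i \text{ odd}\}-\#\{i:n_i\text{ even}\}$, so it reproduces the combinatorial definition of $dg$ on the basis elements. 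Viewing $Cl_D$ as the quotient of $T$ by the two-sided ideal $I$ generated by the elements $\phi_m^D\phi_n^D+\phi_n^D\phi_m^D-\delta_{m,-n}\,1$, it suffices to show that $I$ is a homogeneous ideal, for then the grading of $T$ descends to an algebra grading of the quotient.

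The key step, and the only place where the specific choice of degrees is used, is the observation that
\[
dg(\phi_m^D)+dg(\phi_{-m}^D)=0\qquad\text{for every } m\in\mathbb{Z}+\tfrac12 .
\]
Indeed, if $m=2k+\tfrac12$ then $-m=2(-k)-\tfrac12$, so $\phi_m^D$ and $\phi_{-m}^D$ carry opposite degrees. Consequently each generator $\phi_m^D\phi_n^D+\phi_n^D\phi_m^D-\delta_{m,-n}\,1$ of $I$ is homogeneous: when $m\neq -n$ the scalar term is absent and both products have the common degree $dg(\phi_m^D)+dg(\phi_n^D)$, while when $m=-n$ the scalar $1$ sits in degree $0$, which matches $dg(\phi_m^D)+dg(\phi_{-m}^D)=0$. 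Hence $I$ is homogeneous, the grading descends to $Cl_D$, and since the ordered monomials form a basis of $Cl_D$ on which the descended degree agrees with $dg$, the descended grading is precisely $dg$. This proves the first assertion, that $dg$ is an algebra grading.

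The second assertion is then immediate from general graded-algebra considerations: in any $\mathbb{Z}$-graded associative algebra, left multiplication by a homogeneous element of degree $d$ is a homogeneous linear operator of degree $d$, since it sends the degree-$e$ component into the degree-$(d+e)$ component. As $\phi_n^D$ is homogeneous of degree $+1$ when $n=2k+\tfrac12$ and of degree $-1$ when $n=2k-\tfrac12$, left multiplication by $\phi_n^D$ raises $dg$ by $1$ in the first case and lowers it by $1$ in the second. The only genuine content is the homogeneity identity $dg(\phi_m^D)+dg(\phi_{-m}^D)=0$ above; everything else is the standard transfer of a grading through a homogeneous quotient and the elementary behavior of multiplication in a graded algebra.
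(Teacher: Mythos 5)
Your proposal is correct, and it rests on exactly the same computational kernel as the paper's proof: the Clifford relation $\{\phi^D_m,\phi^D_n\}=\delta_{m,-n}1$ is degree-homogeneous because $\phi^D_m$ and $\phi^D_{-m}$ carry opposite degrees (if $m=2k+\tfrac12$ then $-m=2(-k)-\tfrac12$), so the scalar $1$ in degree $0$ is consistent with the relation. The difference is in how that observation is packaged. The paper works directly with the chosen basis of $Cl_D$: for each pair it checks $dg(\phi^D_m\phi^D_n)=dg(\phi^D_n\phi^D_m)$, splitting into the cases $m\neq -n$ (anticommutation, which does not affect the degree) and $m=-n$ (where $dg(1)=0$ is used), and then declares the relation compatible with $dg$; this is quick but leaves implicit the general principle that pairwise compatibility of the relations suffices to make the degree of an arbitrary word well defined under reduction to basis form. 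You instead lift the grading to the tensor algebra and invoke the standard fact that a quotient by an ideal generated by homogeneous elements inherits the grading, then identify the descended grading with the combinatorial $dg$ on the basis. This is the more structural route: it turns the paper's terse compatibility check into a complete argument with no hidden induction on word length, at the cost of introducing the tensor-algebra machinery. Your handling of the second assertion (left multiplication by a homogeneous element of degree $d$ shifts degrees by $d$ in any graded algebra) is the same conclusion the paper draws as "obvious from the definition," just stated as the general graded-algebra fact it is.
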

\begin{proof} For any pair $\phi_m, \phi_n$ ($m\neq n$) we claim $dg (\phi_m \phi_n) = dg (\phi_n \phi_m)$. If $m \neq -n$ then $\phi_m \phi_n = -\phi_n \phi_m$, and one of these expressions appears in the given basis so determines the degree of both $\phi_m \phi_n$ and $ \phi_n \phi_m$. If $m = -n$, $\phi_m \phi_n = 1 -\phi_n \phi_m$, and again, either the left or right hand side is a sum of basis vectors, since $dg (1) = 0$ the degree of $\phi_m \phi_n$ and $ \phi_n \phi_m$ again agree. Thus the Clifford algebra relation \eqref{eqn:Com-D}   is compatible with the definition of $dg$.
  Now it is obvious from the definition of the grading that as an operator left multiplication by $\phi_n^D\in CL_D$ is a homogeneous operator of the given degree.
  \end{proof}
The Fock space of the  field $\phi ^D(z)$ is the highest weight module of $\mathit{Cl_D}$ with vacuum vector $|0\rangle $, so that $\phi^D_n|0\rangle=0 \ \text{for} \  n >0$.  It is denoted  by $\mathit{F^{\ten \frac{1}{2}}}$ (see e.g.  \cite{DJKM6}, \cite{Triality}, \cite{Wang},  \cite{WangDual}, \cite{WangDuality}, \cite{WangKac}).
$\mathit{F^{\ten \frac{1}{2}}}$  has basis
\begin{equation}
\{ \phi^D_{-n_k-\frac{1}{2}}\dots \phi^D_{-n_2-\frac{1}{2}}\phi^D_{-n_1-\frac{1}{2}}|0\rangle, |0\rangle \big |\ n_k>\dots >n_2>n_1\geq 0, n_i\in \mathbb{Z}, i=1, 2, \dots, k\}
\end{equation}
The space $\mathit{F^{\ten \frac{1}{2}}}$ has  a $\mathbb{Z}_2$ grading given by  $ k\  mod\  2$,
\[
\mathit{F^{\ten \frac{1}{2}}}=\mathit{F_{\bar{0}}^{\ten \frac{1}{2}}}\oplus \mathit{F_{\bar{1}}^{\ten \frac{1}{2}}},
\]
where $\mathit{F_{\bar{0}}^{\ten \frac{1}{2}}}$ (resp. $\mathit{F_{\bar{1}}^{\ten \frac{1}{2}}}$) denote the even (resp. odd) components of $\mathit{F^{\ten \frac{1}{2}}}$. This $\mathbb{Z}_2$ grading can be extended  to a $\mathbb{Z}_{\geq 0}$ grading $\tilde{L}$, called ``length", by setting
\begin{equation}
\tilde{L} (\phi^D_{-n_k-\frac{1}{2}}\dots \phi^D_{-n_2-\frac{1}{2}}\phi^D_{-n_1-\frac{1}{2}}|0\rangle)=k.
\end{equation}
The space $\mathit{F^{\ten \frac{1}{2}}}$ can be given a super vertex algebra structure, as is known from e.g. \cite{Triality}, \cite{Wang}, \cite{Kac}.

The $\mathbb Z$ grading $dg$ on $Cl_D$ induces a $\mathbb{Z}$ grading $dg$ on $\mathit{F^{\ten \frac{1}{2}}}$ by assigning $dg(|0\rangle)=0$ and
\begin{align}\label{grading2}
dg(\phi^D_{-n_k-\frac{1}{2}}\dots \phi^D_{-n_2-\frac{1}{2}}\phi^D_{-n_1-\frac{1}{2}}|0\rangle)&=\#\{i=1, 2, \dots, k| \ n_i=\text{odd}\}\\&\quad -\#\{i=1, 2, \dots, k| \ n_i=\text{even}\}.\notag
\end{align}
Denote the  space of homogenous elements  of degree $dg=n \in \mathbb Z$ by $\mathit{F_{(n)}^{\ten \frac{1}{2}}}$, hence as vector spaces we have
\begin{equation}
\mathit{F^{\ten \frac{1}{2}}} = \oplus _{n\in \mathbb{Z}} \mathit{F_{(n)}^{\ten \frac{1}{2}}}.
\end{equation}
Introduce  also the special vectors $v_n\in  \mathit{F_{(n)}^{\ten \frac{1}{2}}}$ defined by
\begin{align}
&v_0=|0\rangle \in  \mathit{F_{(0)}^{\ten \frac{1}{2}}};\\
&v_n=\phi^D_{-2n+1-\frac{1}{2}}\dots \phi^D_{-3-\frac{1}{2}}\phi^D_{-1-\frac{1}{2}}|0\rangle \in  \mathit{F_{(n)}^{\ten \frac{1}{2}}}, \quad \text{for}\ n>0;\\
&v_{-n}=\phi^D_{-2n+2-\frac{1}{2}}\dots \phi^D_{-2-\frac{1}{2}}\phi^D_{-\frac{1}{2}}|0\rangle\in  \mathit{F_{(-n)}^{\ten \frac{1}{2}}}, \quad \text{for}\ n>0.
\end{align}
Note that the vectors $v_n\in \mathit{F_{(n)}^{\ten \frac{1}{2}}}$ have minimal length $\tilde{L}=|n|$  among the vectors within $\mathit{F_{(n)}^{\ten \frac{1}{2}}}$, and they are in fact the unique (up-to a scalar) vectors minimizing  the length $\tilde{L}$, such that the index $n_k$ is minimal too.

 The  Lie algebra  $\bar{a}_{\infty}$ (sometimes denoted $\bar{gl}_{\infty}$ or just $\mathfrak {gl}$, see for instance \cite{WangDual}, \cite{WangDuality}, \cite{WangKac}) is the Lie algebra of infinite matrices
\begin{equation}
\bar{a}_{\infty}=\{(a_{ij}) | \ i, j\in \mathbb{Z}, \ a_{ij} =0 \ \text{for} |i-j|\gg 0 \}.
\end{equation}
As usual denote the elementary matrices by $E_{ij}$.

The algebra $a_{\infty}$ (often  denoted also by $\widehat{gl}_{\infty}$ or $\widehat{\mathfrak {gl}}$) is a central extension of $\bar{a}_{\infty}$ by a central element $c$, $a_{\infty}=\bar{a}_{\infty}\oplus \mathbb{C}c$, with cocycle given by
\begin{equation}
\label{equation:cocycle-a}
 C(A, B) =Trace([J, A]B),
\end{equation}
where the matrix $J=\sum_{i\le 0}E_{ii}$. In particular
\begin{align*}
C(E_{ij},E_{ji})&=-C(E_{ji},E_{ij})=1,\quad \text{if} \enspace i\leq 0,\enspace j\geq 1 \\
C(E_{ij},E_{kl})&=0\quad \text{ in all other cases}.
\end{align*}

The commutation relations for the elementary matrices in $a_\infty$ are
\begin{align*}
[E_{ij},E_{kl}]=\delta_{jk}E_{il}-\delta_{li}E_{kj}+ C(E_{ij},E_{kl})c.
\end{align*}
The non-central generators have generating series
\begin{equation}
E^A(z, w) =\sum _{i, j\in \mathbb{Z}} E_{ij}z^{i-1}w^{-j},
\end{equation}
and relations
\begin{align}
[E^A(z_1, w_1)&,  E^A (z_2, w_2)] =E^A(z_1, w_2)\delta(z_2-w_1) -E^A(z_2, w_1)\delta(z_1-w_2) \\
&\quad +\iota_{z_1,w_2}\frac{1}{z_1-w_2}\iota_{w_1, z_2}\frac{1}{w_1-z_2}c -\iota_{w_2, z_1}\frac{1}{z_1-w_2}\iota_{ z_2, w_1}\frac{1}{w_1-z_2}c.\notag
 \end{align}
Here we used the formal delta function notation $\delta(z-w): =\sum_{n\in \mathbb{Z}}z^nw^{-n-1}=\delta (w-z)$ (see e.g. \cite{Kac}, \cite{FZvi}, \cite{ACJ}).

Further,  $a_{\infty}$  has a triangular decomposition
\begin{equation}
 a_\infty= a^-_\infty\oplus a_\infty^0\oplus a^+_\infty.
\end{equation}
Here
$a^{\pm}_\infty$ consists of  correspondingly the strictly upper (strictly lower) triangular infinite matrices;   $a_\infty^0=\mathfrak {gl}_0\oplus \mathbb Cc$ where $\mathfrak {gl}_0$ denotes the diagonal matrices.

The root system of $a_{\infty}$ is $\Delta=\{\epsilon_i-\epsilon_j\,|\,i,j\in\mathbb Z,i\neq j\}$ where $\epsilon_i\in (\mathfrak{gl}_0)^*$ is defined by $\epsilon_i(E_{jj})=\delta_{ij}$ ($i,j\in\mathbb Z)$. There is a conjugate linear, involutive anti-automorphism $\omega\in \text{End}( a_\infty)$ defined by $\omega(E_{ij})=E_{ji}$ and this is called ``the compact anti-involution".

For \textcent$\in \mathbb C$ and $\Lambda\in\bigoplus_{i\in\mathbb Z,i\neq 0}(\mathbb CE_{ii})^*$, set
\begin{align*}
^a\lambda_i&:=\Lambda(E_{ii})  \\
^aH_i&:=E_{ii}-E_{i+1,i+1}+\delta_{i,0}c  \\
^ah_i&:=\Lambda({^aH_i})={^a}\lambda_i-{^a}\lambda_{i+1}+\delta_{i,0}\text{\textcent}
\end{align*}
Define $^a\!\Lambda_j\in (a_\infty^0)^*$ by
\begin{align*}
^a\!\Lambda_j(E_{ii})&=\begin{cases} 1, &\text{ for }0<i \leq j,\\
-1, &\text{ for }j<i\leq 0,\\
0,&\text{ otherwise,}
\end{cases} \\
^a\!\Lambda_j(c)&=0.
\end{align*}
Define also $^a\!\hat\Lambda_0\in (a_\infty^{0})^*$ by
$
^a\!\hat\Lambda_0(c)=1$, $ ^a\!\hat\Lambda_0(E_{ii})=0$ for $i\in\mathbb Z$.  Then the $i$-th fundamental weight is
$$
^a\!\hat\Lambda_j= ^a\!\Lambda_j+^a\!\hat\Lambda_0,\quad i\in\mathbb Z.
$$
Let $L( a_\infty; ^a\!\Lambda,\text{\textcent})=L(\widehat{\mathfrak{gl}}_\infty; ^a\!\Lambda,\text{\textcent})$ denote the highest weight $ a_\infty$-module with highest weight $\Lambda$ and central charge \textcent.

The algebra   $\bar{d}_{\infty}$ is  defined as the \textbf{subalgebra} of $\bar{a}_{\infty}$,  consisting of the infinite matrices preserving the bilinear form $D(v_i, v_j)=\delta_{i, 1-j}$, i.e.,
\begin{equation}
\bar{d}_{\infty}=\{(a_{ij})\in \bar{a}_{\infty} | \ a_{ij}=-a_{1-j, 1-i} \}.
\end{equation}
Denote by $d_{\infty}$ the  central extension of $\bar{d}_{\infty}$ by a central element $c$, $d_{\infty}=\bar{d}_{\infty}\oplus \mathbb{C} c$, with the same cocycle as for $a_{\infty}$, \eqref{equation:cocycle-a}.  The commutation relations for the elementary matrices in $d_\infty$ are obtained using the relations in $a_\infty$: \footnote{Note that in \cite{Kac-Lie} the commutation relation  $[E_{ij},E_{kl}]=\delta_{jk}E_{il}-\delta_{li}E_{kj}+\mathbf{\frac{1}{2}}C(E_{ij},E_{kl})c$ is used instead.}
\begin{align*}
[E_{ij},E_{kl}]=\delta_{jk}E_{il}-\delta_{li}E_{kj}+ C(E_{ij},E_{kl})c.
\end{align*}
The generators for the algebra $d_\infty$ can be written in terms of these elementary matrices as:
\[
\{ E_{i, j} - E_{1-j, 1-i}, \  i, j \in \mathbb{Z}; \text{and} \ \ c\}.
\]
We can arrange the non-central generators in a generating series
\begin{equation}
E^D(z, w) =\sum _{i, j\in \mathbb{Z}} (E_{ij}-E_{1-j, 1-i})z^{i-1}w^{-j}.
\end{equation}
The generating series  $E^D(z,w)$ obeys the following relations:
\[
E^D(z,w) = -E^D(w,z)
\]
and
\begin{align*}
[E^D(z_1, w_1), &E^D(z_2, w_2)]
 =  E^D(z_1, w_2)\delta(z_2-w_1) -E^D(z_2, w_1)\delta(z_1-w_2) \\
&\hspace{1.7cm} +E^D(w_2,w_1)\delta(z_1-z_2) -E^D(z_1,z_2)\delta(w_1-w_2)  \\
&\quad +2\iota_{z_1,w_2}\frac{1}{z_1-w_2}\iota_{w_1,z_2}\frac{1}{w_1-z_2}c   -2\iota_{z_2,w_1}\frac{1}{z_2-w_1}\iota_{w_2,z_1}\frac{1}{w_2-z_1}c  \\
&\quad -2\iota_{z_1,z_2}\frac{1}{z_1-z_2}\iota_{w_1,w_2}\frac{1}{w_1-w_2}c +2\iota_{z_2,z_1}\frac{1}{z_1-z_2}\iota_{w_2,w_1}\frac{1}{w_2-w_1}c.
\end{align*}
The assignment $E^D(z, w)\mapsto :\phi^D(z)\phi^D (w):$, \ $c\mapsto \frac{1}{2}Id_{\mathit{F^{\ten \frac{1}{2}}}}$ gives a representation of the Lie algebra $d_{\infty}$ on $ \mathit{F^{\ten \frac{1}{2}}}$ (see e.g. \cite{DJKM6}, \cite{Wang}, \cite{WangKac}, \cite{ACJ}), which we denote by $r_{\frac{1}{2}}$.
Further, it is known (see e.g. \cite{Wang}, \cite{WangKac}, \cite{WangDual}) that as $d_{\infty}$ modules
\begin{equation}
\mathit{F_{\bar{0}}^{\ten \frac{1}{2}}}\cong L( d_\infty;^d\!\hat \Lambda_0,\frac{1}{2}); \quad \mathit{F_{\bar{1}}^{\ten \frac{1}{2}}}\cong L( d_\infty;^d\!\hat \Lambda_1, \frac{1}{2}).
\end{equation}
where
$L( d_\infty;^d\!\Lambda,\text{\textcent})$ denotes the highest weight $d_\infty$-module with highest weight $^d\!\Lambda$ and central charge \textcent. The highest weights above are defined by the following, using the symmetry in $d_\infty$:
\begin{align*}
& ^d\!\hat\Lambda_0(E_{i,i}- E_{1-i,1-i})=0\quad \text{for any }\ i\in \mathbb{Z};\\
& ^d\!\hat\Lambda_1(E_{i,i}- E_{1-i,1-i})=1; \quad \text{for} \ i=1;\quad  ^d\!\hat\Lambda_1(E_{i,i}- E_{1-i,1-i})=0\quad \text{for}\ i\neq 0, 1;\\
& ^d\!\hat\Lambda_0(c)=^d\!\hat\Lambda_1(c)=\frac{1}{2}.
\end{align*}
As a  $d_\infty$-module  with central charge $c=\frac{1}{2}$ $\mathit{F^{\ten \frac{1}{2}}}$ then decomposes as
\[
\mathit{F^{\ten \frac{1}{2}}}=\mathit{F_{\bar{0}}^{\ten \frac{1}{2}}}\oplus \mathit{F_{\bar{1}}^{\ten \frac{1}{2}}}\cong L( d_\infty;^d\!\hat \Lambda_0, \frac{1}{2})\oplus L( d_\infty;^d\!\hat \Lambda_1, \frac{1}{2}),
\]
Next we will show that $\mathit{F^{\ten \frac{1}{2}}}$  is also a module for $a_{\infty}$ (and thus $d_{\infty}$) with central charge $c=1$.
\begin{remark}\label{remark:F^1}
It is well known (in the context of representation theory it was introduced  by I. Frenkel in \cite{Frenkel-BF} and extensively used afterwards) that
\[
\mathit{F^{\ten \frac{1}{2}}}\ten \mathit{F^{\ten \frac{1}{2}}} \cong \mathit{F^{\ten 1}};
\]
where $\mathit{F^{\ten 1}}$ is the Fock space of \textbf{1 pair} of two charged fermions. The two charged fermions are  the fields $\psi^+ (z)$ and $\psi^- (z)$ with  operator product expansions (OPEs):
\begin{align*}
\psi^+ (z)\psi^- (w)\sim \frac{1}{z-w}\sim \psi ^-(z)\psi ^+(w),\quad \psi^+ (z)\psi^+ (w)\sim 0\sim \psi ^-(z)\psi ^-(w)
\end{align*}
where the $1$ above denotes the identity map $Id_{\mathit{F^{\ten 1}}}$.
We index the fields  $\psi^+ (z)$ and $\psi^- (z)$ as follows:
\begin{equation}
\psi ^+(z) =\sum _{n\in \mathbb{Z}} \psi^+_{n} z^{-n-1}, \quad \psi^- (z) =\sum _{n\in \mathbb{Z}} \psi^-_n z^{-n-1};
\end{equation}
and their modes $\psi^+_n$ and $\psi^-_n$, $n\in \mathbb{Z}$ form a Clifford algebra $\mathit{Cl_A}$  with relations
\begin{equation}
\{\psi^+_m,\psi^-_n\}=\delta _{m+n, -1}1, \quad \{\psi^+_m,\psi^+_n\}=\{\psi^-_m,\psi^-_n\}=0.
\end{equation}
The Fock space $\mathit{F^{\ten 1}}$ is the highest weight   representation of $\mathit{Cl_A}$ generated by the vacuum  vector $|0\rangle $,  so that $\psi^+_n|0\rangle=\psi^-_n|0\rangle=0 \ \text{for} \  n\geq 0$  (see e.g. \cite{Frenkel-BF},  \cite{KacRaina}, \cite{Wang}, \cite{Kac} for more details on $\mathit{F^{\ten 1}}$).
It is well known (see e.g., \cite{Kac}, \cite{FZvi}, \cite{LiLep}) that $\mathit{F^{\ten 1}}$ has a structure of a super vertex algebra (i.e., with a single point of locality at $z=w$ in the OPEs); this vertex algebra  is often called  ``charged free fermion super vertex algebra". It is also well known (introduced by I. Frenkel, \cite{Frenkel-BF}; and \cite{DJKM3}) and extensively used (e.g., \cite{KacRaina}, \cite{Triality}, \cite{Kac-Lie}, \cite{WangDuality}, \cite{WangDual} among many others) that  $\mathit{F^{\ten 1}}$ is a module for the $a_{\infty}$ algebra, moreover
\[
\mathit{F^{\ten 1}}\cong \oplus_{n\in \mathbb{Z}} L(a_\infty; ^a\!\hat\Lambda_n, 1).
\]
This isomorphism has often been referred to as equivalent to  the well known charged free boson-fermion correspondence (an isomorphism between the super vertex algebra on $\mathit{F^{\ten 1}}$ and the super vertex algebra of the rank one odd lattice). Here we are concerned with a more subtle point:   a boson-fermion correspondence is not just an isomorphism of Lie algebra modules, but an isomorphism of  appropriate field theories (or vertex algebra structures).  As shown and used by I.  Frenkel  (\cite{Frenkel-BF}) one has
\[
\mathit{F^{\ten 1}}\cong \mathit{F^{\ten \frac{1}{2}}}\ten \mathit{F^{\ten \frac{1}{2}}}  ,
\]
which is implemented by
\[
 \phi_1(z)=\frac{1}{\sqrt{2}}\left(\psi ^+(z)+\psi ^-(z)\right), \ \phi_2(z)=\frac{i}{\sqrt{2}}\left(\psi ^+(z)-\psi ^-(z)\right).
\]
In today's language we would say that this   mapping of the fields generates an isomorphism of super vertex algebras, a fact which is extensively used by many authors (e.g.,  \cite{Triality}, \cite{MasonTuiteZ}, \cite{Katrina} among many).
But as we will show below,  as $a_{\infty}$ modules
\[
\mathit{F^{\ten 1}}\cong  \mathit{F^{\ten \frac{1}{2}}} \cong \oplus_{n\in \mathbb{Z}} L(a_\infty; ^a\!\hat\Lambda_n, 1).
\]
In other words, as vector spaces and as $a_{\infty}$ modules $\mathit{F^{\ten \frac{1}{2}}}$ and $\mathit{F^{\ten 1}}$ are isomorphic, even though as super vertex algebras $\mathit{F^{\ten \frac{1}{2}}}\ten \mathit{F^{\ten \frac{1}{2}}}$ and $\mathit{F^{\ten 1}}$ are isomorphic.  The difference comes from the vertex algebra structures: in the case of $\mathit{F^{\ten 1}}$ the boson-fermion correspondence (of type A) is an isomorphism of \textbf{super vertex algebras}, requiring locality only at $z=w$ (see e.g., \cite{Kac}). But, in the  case of $\mathit{F^{\ten \frac{1}{2}}}$ the boson-fermion correspondence (of type D-A) is an isomorphism of \textbf{twisted vertex algebras}, requiring locality at both $z=w$ and $z=-w$ (see \cite{AngTVA}, \cite{ACJ}).
\end{remark}

 Now we proceed to obtain a representation of $a_{\infty}$ on $\mathit{F^{\ten \frac{1}{2}}}$ by introducing fields arising from a 2-point local twisted vertex algebra:
\begin{prop}\label{prop:r_1}
Let
\begin{equation}
\phi^{+DA}(z)=\frac{\phi^D(z)-\phi^D(-z)}{2}, \quad \phi^{-DA}(z)=\frac{\phi^D(z)+\phi^D(-z)}{2}.
\end{equation}
The assignment $zE(z^2, w^2)\mapsto :\phi^{+DA}(z)\phi^{-DA} (w):$, \ $c\mapsto Id_{\mathit{F^{\ten \frac{1}{2}}}}$ gives a representation of the Lie algebra $a_{\infty}$ on $\mathit{F^{\ten \frac{1}{2}}}$ with central charge $c=1$.
\end{prop}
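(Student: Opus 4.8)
The plan is to reduce the assertion to a single identity between generating series. Since $E^A(z,w)=\sum_{i,j}E_{ij}z^{i-1}w^{-j}$ packages all the brackets $[E_{ij},E_{kl}]$ together with the cocycle, it suffices to show that the field $X(z,w):=\,:\phi^{+DA}(z)\phi^{-DA}(w):$ on $\mathit{F^{\ten\frac12}}$, together with $c\mapsto Id$, obeys the same generating-series commutation relation as $zE^A(z^2,w^2)$ with $c$ replaced by $Id$. Comparing coefficients of $z^{2i-1}w^{-2j}$ shows that, mode-wise, the assignment is $E_{ij}\mapsto\,:\phi^D_{-2i+\frac12}\phi^D_{2j-\frac12}:$; since $\phi^D_n$ annihilates any fixed vector for $n\gg 0$, each such normally ordered product is a well-defined operator on $\mathit{F^{\ten\frac12}}$, so $X(z,w)$ is an honest field and the central element acts as $Id$ by construction. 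The entire content is thus the verification of the bracket.

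First I would record the contractions of the constituent fields. From $\lfloor\phi^D(z)\phi^D(w)\rfloor = i_{z,w}\frac{1}{z-w}$ and the definitions of $\phi^{\pm DA}$, a direct expansion gives
\begin{align*}
\lfloor\phi^{+DA}(z)\phi^{+DA}(w)\rfloor &= \lfloor\phi^{-DA}(z)\phi^{-DA}(w)\rfloor = 0,\\
\lfloor\phi^{+DA}(z)\phi^{-DA}(w)\rfloor &= i_{z,w}\frac{z}{z^2-w^2},\qquad \lfloor\phi^{-DA}(z)\phi^{+DA}(w)\rfloor = i_{z,w}\frac{w}{z^2-w^2}.
\end{align*}
Thus $\phi^{+DA}$ and $\phi^{-DA}$ behave exactly like a charged fermion pair $\psi^{\pm}$, but written in the squared variables $z^2,w^2$ and carrying a numerator $z$, resp. $w$; in particular the like-with-like contractions vanish, and the poles sit at $z=\pm w$, so the fields are $2$-point local and the $N=2$ form of Wick's Theorem applies.

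Next I would apply Wick's Theorem to $X(z_1,w_1)X(z_2,w_2)$. Because the $++$ and $--$ contractions vanish, only the two mixed single contractions and their product survive, and the commutator is obtained by replacing each surviving contraction by the difference of its $i_{z,w}$ and $i_{w,z}$ expansions (the fully normally ordered quartic term cancels). The key delta identity is
\[
i_{z,w}\frac{z}{z^2-w^2}-i_{w,z}\frac{z}{z^2-w^2}=z\,\delta(z^2-w^2)=\frac12\bigl(\delta(z-w)+\delta(z+w)\bigr),
\]
which, together with $:\phi^{-DA}(w_1)\phi^{+DA}(z_2):=-X(z_2,w_1)$, turns the single-contraction terms into $z_1z_2E^A(z_1^2,w_2^2)\delta(z_2^2-w_1^2)-z_1z_2E^A(z_2^2,w_1^2)\delta(z_1^2-w_2^2)$, precisely the non-central part of the $a_\infty$ relation after the substitution $z_i\mapsto z_i^2$, $w_i\mapsto w_i^2$ and multiplication by the prefactor $z_1z_2$.

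The one genuinely delicate point is the central term. The double contraction equals $\lfloor\phi^{+DA}(z_1)\phi^{-DA}(w_2)\rfloor\,\lfloor\phi^{-DA}(w_1)\phi^{+DA}(z_2)\rfloor$, and the thing to watch is that, substituting into the formulas above, its numerator is $z_1z_2$ (not $z_1w_1$), which is exactly the prefactor carried by $z_1E^A(z_1^2,w_1^2)$ and $z_2E^A(z_2^2,w_2^2)$. Using the factorization $i_{z,w}\frac{z}{z^2-w^2}=z\,i_{z,w}\frac{1}{z^2-w^2}$, where the remaining expansion is the $\iota$-expansion in the squared variable, the difference of the two orderings of the double contraction becomes exactly $z_1z_2$ times the $a_\infty$ cocycle in the squared variables, so the central charge comes out to be exactly $1$. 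This is the mechanism by which the doubling $z\mapsto z^2$ promotes the $c=\frac12$ neutral-fermion normalization $\frac1{z-w}$ to central charge $c=1$; it is the step I expect to be the main obstacle, and I would confirm the numerator bookkeeping by the explicit substitution into $\lfloor\phi^{-DA}(z)\phi^{+DA}(w)\rfloor=i_{z,w}\frac{w}{z^2-w^2}$. Assembling the linear and central contributions then reproduces the full $a_\infty$ bracket with $c=1$, completing the proof.
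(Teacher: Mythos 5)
Your proposal is correct and follows essentially the same route as the paper's own proof: both verify the hypotheses of Wick's Theorem for the $2$-point local fields $\phi^{\pm DA}$, compute the vanishing like-with-like contractions and the mixed contractions $i_{z,w}\frac{z}{z^2-w^2}$, $i_{z,w}\frac{w}{z^2-w^2}$, extract the commutator of $:\phi^{+DA}(z_1)\phi^{-DA}(w_1):$ and $:\phi^{+DA}(z_2)\phi^{-DA}(w_2):$ from the single and double contractions, and match it (using $:\phi^{-DA}(w)\phi^{+DA}(z):=-:\phi^{+DA}(z)\phi^{-DA}(w):$) against the $a_\infty$ relations in squared variables with $c\mapsto Id$. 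Your extra remarks---the identity $z\,\delta(z^2-w^2)=\tfrac12\bigl(\delta(z-w)+\delta(z+w)\bigr)$ and the explicit numerator bookkeeping $z_1z_2$ in the double contraction---are correct refinements of the same computation rather than a different argument.
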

\begin{proof}
We will use Wick's Theorem. We have  $\phi^{+DA}(z)=\sum _{n\in \mathbb{Z}} \phi^D _{-2n+\frac{1}{2}} z^{2n-1} $ and
$\phi^{-DA}(z)=\sum _{n\in \mathbb{Z}} \phi^D _{-2n-\frac{1}{2}} z^{2n} $, thus the modes  $\phi^{+DA}(z)$ (resp. $\phi^{-DA}(z)$) are the operator coefficients of  $\phi^{D}(z)$ in front of odd (resp. even) powers of the formal variable $z$. Hence, since  $\phi^{D}(z)$ obeys the second condition  of Wick's theorem, $\phi^{+DA}(z)$ and $\phi^{-DA}(z)$ obey it too.  We also have
\begin{align}
\phi^{+DA}(z)\phi^{+DA}(w)&\sim 0,  \quad \phi^{-DA}(z)\phi^{-DA}(w)\sim 0; \\
\phi^{+DA}(z)\phi^{-DA}(w)&\sim  \frac{1}{2}  \left(\frac{1}{z-w}+\frac{1}{z+w}\right)\sim  \frac{z}{z^2-w^2};\\
\phi^{-DA}(z)\phi^{+DA}(w)&\sim \frac{1}{2} \left(\frac{1}{z-w}-\frac{1}{z+w}\right)\sim   \frac{w}{z^2-w^2};
\end{align}
hence the first condition of Wick's theorem is also satisfied. Thus from Wick's theorem we have
\begin{align*}
:\phi^{+DA}(z_1)&\phi^{-DA}(w_1): :\phi^{+DA}(z_2)\phi^{-DA}(w_2): \\
& \sim \frac{z_1}{z_1^2-w_2^2}:\phi^{-DA}(w_1)\phi^{+DA}(z_2): +  \frac{z_2}{w_1^2-z_2^2}:\phi^{+DA}(z_1)\phi^{-DA}(w_2):\\
&\quad +\frac{z_1}{z_1^2-w_2^2}\frac{z_2}{w_1^2-z_2^2}.
\end{align*}
Hence
\begin{align*}
&[:\phi^{+DA}(z_1)\phi^{-DA}(w_1):,  :\phi^{+DA}(z_2)\phi^{-DA}(w_2):] \\
& = z_1\delta(z_1^2-w_2^2):\phi^{-DA}(w_1)\phi^{+DA}(z_2): + z_2\delta(z_2^2-w_1^2):\phi^{+DA}(z_1)\phi^{-DA}(w_2):\\
&\quad +\iota_{z_1,w_2}\frac{z_1}{z_1^2-w_2^2}\iota_{w_1, z_2}\frac{z_2}{w_1^2-z_2^2} -\iota_{w_2, z_1}\frac{z_1}{z_1^2-w_2^2}\iota_{z_2, w_1}\frac{z_2}{w_1^2-z_2^2};
\end{align*}
and we use the fact that $:\phi^{-DA}(w)\phi^{+DA}(z):=-:\phi^{+DA}(z)\phi^{-DA}(w):$.
\end{proof}
We will denote this new representation on $\mathit{F^{\ten \frac{1}{2}}}$ by $r_1$.
Since $d_{\infty}$ is a subalgebra of $a_{\infty}$, we have the following
\begin{cor}
$\mathit{F^{\ten \frac{1}{2}}}$ is a module for $d_{\infty}$  with central charge $c=1$ via the restriction of the representation $r_1$.
\end{cor}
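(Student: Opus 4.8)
The plan is to leverage the structural fact, recorded earlier in this section, that $d_\infty$ is realized as a Lie subalgebra of $a_\infty$, and then simply pull back the representation $r_1$ along the inclusion. On the non-central part we have the inclusion of matrix algebras $\bar d_\infty\hookrightarrow\bar a_\infty$, sending each generator $E_{ij}-E_{1-j,1-i}\in\bar d_\infty$ to its image in $\bar a_\infty$. Both algebras are centrally extended by the \emph{same} cocycle $C(A,B)=\operatorname{Trace}([J,A]B)$, so the point to verify is that this matrix inclusion lifts to a homomorphism of the centrally extended algebras $\iota\colon d_\infty\hookrightarrow a_\infty$ fixing the central element $c$. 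This is immediate: the cocycle defining $d_\infty$ is by construction the restriction of $C$ to $\bar d_\infty\times\bar d_\infty$, so brackets and their central correction terms are carried along compatibly, and $\iota$ is an injective Lie algebra homomorphism with $\iota(c)=c$.

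Given $\iota$, the corollary follows by composition. Writing $r_1\colon a_\infty\to\End(\mathit{F^{\ten \frac{1}{2}}})$ for the representation produced in \propref{prop:r_1}, the map $r_1\circ\iota$ is a representation of $d_\infty$ on the same space. Since $\iota(c)=c$ and $r_1(c)=Id_{\mathit{F^{\ten \frac{1}{2}}}}$, the composite sends the central element of $d_\infty$ to $Id_{\mathit{F^{\ten \frac{1}{2}}}}$, so the central charge remains $c=1$. If one wishes to exhibit the action explicitly at the level of generating series, one observes that the $d_\infty$ series antisymmetrizes the $a_\infty$ series, namely $E^D(z,w)=E^A(z,w)-E^A(w,z)$ (which also makes the stated relation $E^D(z,w)=-E^D(w,z)$ manifest), so the image of $E^D$ under $r_1$ is obtained from the field realization $zE^A(z^2,w^2)\mapsto\ :\phi^{+DA}(z)\phi^{-DA}(w):$ of \propref{prop:r_1} by the same antisymmetrization; no further computation is needed to establish the module structure itself.

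There is essentially no obstacle here: the entire content is the general principle that restriction of a representation along a subalgebra inclusion is again a representation, together with the bookkeeping that $d_\infty$ and $a_\infty$ share the same central element. The only mildly delicate point worth stating explicitly is the compatibility of the two cocycles, which is why I would make the inclusion $\iota$ explicit rather than merely assert that $d_\infty$ is a subalgebra of $a_\infty$; once that is in place, the preservation of the central charge $c=1$ is forced.
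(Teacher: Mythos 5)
Your proposal is correct and matches the paper's reasoning exactly: the paper derives this corollary in one line from the fact that $d_{\infty}$ is a subalgebra of $a_{\infty}$ (with the same cocycle and central element), so the representation $r_1$ restricts with central charge $c=1$ unchanged. Your explicit verification of the inclusion $\iota$, the cocycle compatibility, and the antisymmetrization identity $E^D(z,w)=E^A(z,w)-E^A(w,z)$ simply fills in details the paper leaves tacit.
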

If we introduce a normal ordered product $:\phi^D_m \phi^D_n:$ on the modes $\phi^D_m$ of the field $\phi^{D}(z)$, compatible with the normal ordered product of fields (Definition \ref{defn:normalorder}), we have to have
\[
:\phi^{D}(z)\phi^{D}(w): =\sum _{m,n\in \mathbf{Z+\frac{1}{2}}} :\phi^D_{-m-\frac{1}{2}} \phi^D_{-n-\frac{1}{2}}:z^{m}w^{n},
\]
and thus
\begin{align}
:\phi^D_{-m-\frac{1}{2}}& \phi^D_{-n-\frac{1}{2}}:= \begin{cases} \phi^D_{-m-\frac{1}{2}}\phi^D_{-n-\frac{1}{2}} &\quad \text{for}\  m+n\neq 1\\
  \phi^D_{-m-\frac{1}{2}}\phi^D_{-n-\frac{1}{2}} -1=-\phi^D_{-n-\frac{1}{2}}\phi^D_{-m-\frac{1}{2}} &\quad \text{for}\ m+n= -1, n\geq 0,\\
  \phi^D_{-m-\frac{1}{2}}\phi^D_{-n-\frac{1}{2}} & \quad \text{for}\ m+n= -1, m\geq 0.\end{cases}
\end{align}
Hence the well known representation $r_{\frac{1}{2}}$ of $d_{\infty}$ on $\mathit{F^{\ten \frac{1}{2}}}$ with central charge $c=\frac{1}{2}$ is defined by
\[
r_{\frac{1}{2}}(E_{m, n} - E_{1-n, 1-m})=:\phi^D_{-m +1/2} \phi^D_{n-\frac{1}{2}}:
\]
for all $m,n\in\mathbb Z$. Now Proposition \ref{prop:r_1} gives us a new representation $r_1$ of   $a_{\infty}$ on $\mathit{F^{\ten \frac{1}{2}}}$  given  by
\begin{equation}\label{ainftyaction}
r_1(E_{m, n})  =:\phi^D_{-2m+\frac{1}{2}} \phi^D_{2n-\frac{1}{2}}:.
\end{equation}
for all $m,n \in\mathbb Z$.
Hence by restriction the representation of $d_{\infty}$ on $\mathit{F^{\ten \frac{1}{2}}}$ with central charge $c=1$ is
\[
r_1(E_{m, n} - E_{1-n, 1-m}):=:\phi^D_{-2m+\frac{1}{2}} \phi^D_{2n-\frac{1}{2}}:- :\phi^D_{2n-3/2} \phi^D_{-2m+3/2}:.
\]
for all $m,n\in\mathbb Z$.
\begin{prop} \textbf(The upper triangular elements  annihilate  $v_n$) \label{singularvector}
 For all $n\in\mathbb Z$, $
a_\infty^+v_n=0.$
\end{prop}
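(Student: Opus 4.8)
The plan is to prove the statement by a direct computation with the neutral-fermion modes, using the explicit formula \eqnref{ainftyaction} for $r_1$. Since, by the triangular decomposition, $a_\infty^+$ is spanned as a vector space by the strictly upper triangular elementary matrices $E_{p,q}$ with $p<q$, and $r_1$ is linear with $r_1(X)v_n$ a finite sum for each $X\in a_\infty^+$, it suffices to show $r_1(E_{p,q})v_n=0$ for every $p<q$ and every $n\in\mathbb Z$. By \eqnref{ainftyaction} this operator is $r_1(E_{p,q})=\,:\phi^D_{-2p+\frac{1}{2}}\phi^D_{2q-\frac{1}{2}}:$.

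Next I would record two structural facts. From \eqnref{eqn:Com-D} we get $\{\phi^D_{-2p+\frac{1}{2}},\phi^D_{2q-\frac{1}{2}}\}=\delta_{p,q}$, so for $p<q$ there is no contraction and the normal ordering is vacuous: $r_1(E_{p,q})=\phi^D_{-2p+\frac{1}{2}}\,\phi^D_{2q-\frac{1}{2}}$ as a genuine product of modes. Then I would split the half-integer modes into the two families $\phi^D_{-2k+\frac{1}{2}}$ and $\phi^D_{-2k-\frac{1}{2}}$ ($k\in\mathbb Z$): by \eqnref{eqn:Com-D} two modes of the same family always anticommute to $0$, whereas $\phi^D_{-2j+\frac{1}{2}}$ and $\phi^D_{-2k-\frac{1}{2}}$ contract only when $j+k=0$. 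The leftmost factor $\phi^D_{-2p+\frac{1}{2}}$ lies in the first family and the rightmost factor $\phi^D_{2q-\frac{1}{2}}=\phi^D_{-2(-q)-\frac{1}{2}}$ in the second; moreover the defining monomials are pure: for $n>0$, $v_n$ is built from the first-family creation modes $\{\phi^D_{-2k+\frac{1}{2}}:1\le k\le n\}$, and for $n<0$ from the second-family creation modes $\{\phi^D_{-2k-\frac{1}{2}}:0\le k\le|n|-1\}$, applied to $|0\rangle$.

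With this setup I would move $\phi^D_{2q-\frac{1}{2}}$ rightward through $v_n$ and then $\phi^D_{-2p+\frac{1}{2}}$, using $\phi^D_r|0\rangle=0$ for $r>0$, organizing the argument by the sign of $n$. In each regime the inequality $p<q$ forces a degeneration: either an annihilation mode survives all the anticommutations and annihilates $|0\rangle$, or, after the unique admissible contraction of $\phi^D_{2q-\frac{1}{2}}$ with a generator of $v_n$, the surviving factor $\phi^D_{-2p+\frac{1}{2}}$ duplicates a generator already present in $v_n$ and the monomial vanishes by $(\phi^D_r)^2=0$. The constraint $p<q$ enters precisely to rule out the coincidence $p=q$, which is the only configuration producing a nonzero (central) contribution.

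The step I expect to be the main obstacle is the bookkeeping in the borderline cases where $\phi^D_{2q-\frac{1}{2}}$ genuinely contracts with one generator of $v_n$ (that is, $q\in\{1,\dots,n\}$ when $n>0$, or $-q\in\{0,\dots,|n|-1\}$ when $n<0$): one must check that the leftover factor $\phi^D_{-2p+\frac{1}{2}}$ is always either a repeated, hence nilpotent, generator or an annihilation operator passing freely to the vacuum, and that the explicit index ranges together with $p<q$ guarantee exactly one of these in every sub-case, including the borderline $n=0$. The family decomposition above is what keeps this case analysis finite and transparent.
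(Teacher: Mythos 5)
Your overall strategy is the same as the paper's own proof: reduce via \eqnref{ainftyaction} to showing $r_1(E_{p,q})v_n=0$ for $p<q$, then push the two Clifford modes through the monomial $v_n$, with zero arising either because an annihilation operator reaches $|0\rangle$ or because a repeated mode squares to zero. Your two organizing observations are correct and genuinely streamline the bookkeeping: since $\{\phi^D_{-2p+\frac{1}{2}},\phi^D_{2q-\frac{1}{2}}\}=\delta_{p,q}$ by \eqnref{eqn:Com-D}, for $p<q$ the normal ordering in \eqnref{ainftyaction} is vacuous (the paper instead splits into the cases $2i+2k>0$ and $2i+2k\le 0$ to unwind it), and your even/odd family decomposition systematizes the parity arguments the paper makes repeatedly by hand (``it is impossible to have $-2i+\frac{1}{2}=-(-2l+1-\frac{1}{2})$'').

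However, one sub-case --- exactly the borderline case you flag as the main obstacle --- is mischaracterized, in a way that contradicts your own setup. For $n<0$, both $\phi^D_{2q-\frac{1}{2}}$ and all generators of $v_n$ lie in your second family, and by your own observation same-family modes never contract; so the ``unique admissible contraction of $\phi^D_{2q-\frac{1}{2}}$ with a generator of $v_n$'' never occurs when $n<0$. What actually happens when $-q\in\{0,\dots,|n|-1\}$ (which forces $q\le 0$, so $\phi^D_{2q-\frac{1}{2}}$ is a creation operator and does not annihilate the vacuum either --- thus your stated dichotomy misses this configuration entirely) is that $\phi^D_{2q-\frac{1}{2}}$ literally coincides with one of the generators of $v_n$, and the monomial dies by $(\phi^D_r)^2=0$ before the left factor plays any role; in the complementary case $-q\ge|n|$, it is the left factor $\phi^D_{-2p+\frac{1}{2}}$ (an annihilation operator, since $p<q\le 0$) that kills the vacuum, with $p<q$ needed to rule out its contraction with the newly inserted $\phi^D_{2q-\frac{1}{2}}$. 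Relatedly, your closing remark that $p<q$ enters ``precisely to rule out the coincidence $p=q$'' undersells the inequality: for $n>0$, $1\le q\le n$ and $p\ge 1$, it is $p<q$ that places $p$ in $\{1,\dots,q-1\}\subseteq\{1,\dots,n\}\setminus\{q\}$ and hence forces the square-zero duplication, whereas for a lower-triangular $E_{p,q}$ with $p>n\ge q\ge 1$ the identical computation produces a nonzero vector. Both issues are local repairs available from facts you already recorded --- the approach itself is sound and matches the paper's --- but as written your case analysis for $n<0$ is incomplete.
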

\begin{proof} We need to prove that for any $k\geq 1$ and any $i, n\in \mathbb{Z}$, \ $r_1(E_{i, i+k})v_n=0$.
We have $r_1(E_{i, i+k})=:\phi^D_{-2i+\frac{1}{2}} \phi^D_{2i+2k-\frac{1}{2}}:$.
We start with $n=0$, $v_0=|0\rangle$. There are two cases:
The case of $2i+2k> 0$ is  trivial.  In the case  $2i+2k\leq 0$, then
$:\phi^D_{-2i+\frac{1}{2}} \phi^D_{2i+2k-\frac{1}{2}}: =-\phi^D_{2i+2k-\frac{1}{2}}\phi^D_{-2i+\frac{1}{2}} $. But $-2i\geq 2k\geq 2$, hence
\begin{equation*}
r_1(E_{i, i+k})v_0= r_1(E_{i, i+k})|0\rangle= -\phi^D_{2i+2k-\frac{1}{2}}\phi^D_{-2i+\frac{1}{2}}|0\rangle=0.
\end{equation*}
We continue  with $n>0$, where
$v_n=\phi^D_{-2n+1-\frac{1}{2}}\dots \phi^D_{-3-\frac{1}{2}}\phi^D_{-1-\frac{1}{2}}|0\rangle$.
  Again, there are two cases: first we will consider the case when $2i+2k\leq 0$:
\begin{equation*}
r_1(E_{i, i+k}) \phi^D_{-2n+1-\frac{1}{2}}\dots \phi^D_{-3-\frac{1}{2}}\phi^D_{-1-\frac{1}{2}}|0\rangle
=-\phi^D_{2i+2k-\frac{1}{2}}\phi^D_{-2i+\frac{1}{2}} \phi^D_{-2n+1-\frac{1}{2}}\dots \phi^D_{-3-\frac{1}{2}}\phi^D_{-1-\frac{1}{2}}|0\rangle.
\end{equation*}
Now since it is impossible to have $-2i + \frac{1}{2}=-(-2l +1-\frac{1}{2})$ for any $l\in \mathbb{Z}$, $\phi^D_{-2i+\frac{1}{2}}$ will anticommute with any $\phi^D_{-2l+1-\frac{1}{2}}$, and thus
$\phi^D_{-2i+\frac{1}{2}} \phi^D_{-2n+1-\frac{1}{2}}\dots \phi^D_{-3-\frac{1}{2}}\phi^D_{-1-\frac{1}{2}}|0\rangle=0$.

Next let $2i+2k> 0$, then $:\phi^D_{-2i+\frac{1}{2}} \phi^D_{2i+2k-\frac{1}{2}}: =\phi^D_{-2i+\frac{1}{2}} \phi^D_{2i+2k-\frac{1}{2}}$, and
\begin{equation*}
r_1(E_{i, i+k}) \phi^D_{-2n+1-\frac{1}{2}}\dots \phi^D_{-3-\frac{1}{2}}\phi^D_{-1-\frac{1}{2}}|0\rangle
=\phi^D_{-2i+\frac{1}{2}}\phi^D_{2i+2k-\frac{1}{2}} \phi^D_{-2n+1-\frac{1}{2}}\dots \phi^D_{-3-\frac{1}{2}}\phi^D_{-1-\frac{1}{2}}|0\rangle.
\end{equation*}
Now  $\phi^D_{2i+2k-\frac{1}{2}}$ is an operator annihilating the vacuum, and unless we have $2i+2k- \frac{1}{2}=-(-2l+1 -\frac{1}{2})$ for some $1\leq l\leq n, \ l\in \mathbb{Z}$, then it will anticommute with any of the $\phi^D_{-2l+1-\frac{1}{2}}$, $1\leq l\leq n$, and thus $\phi^D_{2i+2k-\frac{1}{2}} \phi^D_{-2n+1-\frac{1}{2}}\dots \phi^D_{-3-\frac{1}{2}}\phi^D_{-1-\frac{1}{2}}|0\rangle=0$. If, on the other hand $2i+2k- \frac{1}{2}=-(-2l +1-\frac{1}{2})$ for some $1\leq l\leq n, \ l\in \mathbb{Z}$, then we have
\begin{align*}
r_1(E_{i, i+k}) \phi^D_{-2n+1-\frac{1}{2}}&\dots \phi^D_{-3-\frac{1}{2}}\phi^D_{-1-\frac{1}{2}}|0\rangle \\
&=\phi^D_{-2i+\frac{1}{2}} \phi^D_{2i+2k-\frac{1}{2}}\phi^D_{-2n+1-\frac{1}{2}}\dots \phi^D_{-2l+1-\frac{1}{2}}\dots \phi^D_{ - 1-\frac{1}{2}}|0\rangle\\
&=\pm\phi^D_{-2i+\frac{1}{2}} \phi^D_{-2n+1-\frac{1}{2}}\dots \widehat{\phi^D_{-2l+1-\frac{1}{2}}}\phi^D_{-2l+3-\frac{1}{2}}\dots \phi^D_{ - 1-\frac{1}{2}}|0\rangle;
\end{align*}
here $\widehat{\phi^D_{ -2l+1 -\frac{1}{2}}}$ denotes the fact that $\phi^D_{-2l+1-\frac{1}{2}}$ is absent. But then from $2i+2k- 1/2=-(-2l+1 -\frac{1}{2})$, we have $i+k=l$ and $-2i+\frac{1}{2} =-2l+2k+1-\frac{1}{2}$, and we know from $k \geq 1 $ that either $-2l+3-\frac{1}{2}\leq -2i+\frac{1}{2}\leq -1-\frac{1}{2}$, or $-2i+\frac{1}{2}\geq \frac{1}{2}$. If $-2i+\frac{1}{2}\geq \frac{1}{2}$ (i.e., $k\geq l$), then since $\phi^D_{-2i+\frac{1}{2}}$ anticommutes with all $\phi^D_{-2l+1-\frac{1}{2}}$, we have
\[
\phi^D_{-2i+\frac{1}{2}} \phi^D_{-2n+1-\frac{1}{2}}\dots \widehat{\phi^D_{-2l+1-\frac{1}{2}}}\phi^D_{-2l+3-\frac{1}{2}}\dots \phi^D_{ - 1-\frac{1}{2}}|0\rangle =0.
 \]
 If on the other hand $-2l+3-\frac{1}{2}\leq -2i+\frac{1}{2}\leq  - 1-\frac{1}{2}$, then
that means $\phi^D_{-2i+\frac{1}{2}}=\phi^D_{-2l_1+1-\frac{1}{2}}$ for  $1\leq  l_1\leq l- 1$ and
\begin{align*}
\phi^D_{-2i+\frac{1}{2}}& \phi^D_{-2n+1-\frac{1}{2}}\dots \widehat{\phi^D_{-2l+1-\frac{1}{2}}}\phi^D_{-2l+3-\frac{1}{2}}\dots \phi^D_{ - 1-\frac{1}{2}}|0\rangle\\
&=\phi^D_{-2i+\frac{1}{2}} \phi^D_{-2n+1-\frac{1}{2}}\dots \widehat{\phi^D_{-2l+1-\frac{1}{2}}}\phi^D_{-2l+3-\frac{1}{2}}\dots \phi^D_{-2l_1+1-\frac{1}{2}}\dots \phi^D_{ - 1-\frac{1}{2}}|0\rangle\\
&=\phi^D_{-2l_1+1-\frac{1}{2}} \phi^D_{-2n+1-\frac{1}{2}}\dots \widehat{\phi^D_{-2l+1-\frac{1}{2}}}\phi^D_{-2l+1-\frac{1}{2}}\dots \phi^D_{-2l_1+1-\frac{1}{2}}\dots \phi^D_{ - 1-\frac{1}{2}}|0\rangle\\
&=\pm \phi^D_{-2n+1-\frac{1}{2}}\dots \widehat{\phi^D_{-2l+1-\frac{1}{2}}} \phi^D_{-2l+1-\frac{1}{2}}\dots \phi^D_{-2l_1+1-\frac{1}{2}}\phi^D_{-2l_1+1-\frac{1}{2}}\dots \phi^D_{ - 1-\frac{1}{2}}|0\rangle =0;
\end{align*}
since $\phi^D_{-2l_1+1-\frac{1}{2}}\phi^D_{-2l_1+1-\frac{1}{2}}=0$.

Consider now $v_{-n}=\phi^D_{-2n+2-\frac{1}{2}}\dots \phi^D_{-2-\frac{1}{2}}\phi^D_{-\frac{1}{2}}|0\rangle$, $n>0$.
Again, first we consider the case  $2i+2k>0$, when
$:\phi^D_{-2i+\frac{1}{2}} \phi^D_{2i+2k-\frac{1}{2}}: =\phi^D_{-2i+\frac{1}{2}} \phi^D_{2i+2k-\frac{1}{2}}$.
\begin{equation*}
r_1(E_{i, i+k}) \phi^D_{-2n+2-\frac{1}{2}}\dots \phi^D_{-2-\frac{1}{2}}\phi^D_{-\frac{1}{2}}|0\rangle=\phi^D_{-2i+\frac{1}{2}} \phi^D_{2i+2k-\frac{1}{2}}\phi^D_{-2n+2-\frac{1}{2}}\dots \phi^D_{-2-\frac{1}{2}}\phi^D_{-\frac{1}{2}}|0\rangle.
\end{equation*}
Now since it is impossible to have $2i+2k-\frac{1}{2}=-(-2l +2-\frac{1}{2})$ for any $l\in \mathbb{Z}$, $\phi^D_{2i+2k-\frac{1}{2}}$ will anticommute with any $\phi^D_{-2l-\frac{1}{2}}$, and thus $\phi^D_{2i+2k-\frac{1}{2}}\phi^D_{-2n+2-\frac{1}{2}}\dots \phi^D_{-2-\frac{1}{2}}\phi^D_{-\frac{1}{2}}|0\rangle=0$.

Next, let $2i+2k\leq 0$, then $:\phi^D_{-2i+\frac{1}{2}} \phi^D_{2i+2k-\frac{1}{2}}: = -\phi^D_{2i+2k-\frac{1}{2}}\phi^D_{-2i+\frac{1}{2}}$, and
\begin{equation*}
r_1(E_{i, i+k}) \phi^D_{-2n+2-\frac{1}{2}}\dots \phi^D_{-2-\frac{1}{2}}\phi^D_{-\frac{1}{2}}|0\rangle
=-\phi^D_{2i+2k-\frac{1}{2}}\phi^D_{-2i+\frac{1}{2}} \phi^D_{-2n+2-\frac{1}{2}}\dots \phi^D_{-2-\frac{1}{2}}\phi^D_{-\frac{1}{2}}|0\rangle.
\end{equation*}
Now  $\phi^D_{-2i+\frac{1}{2}}$ is an operator annihilating the vacuum, and unless we have $-2i+ \frac{1}{2}=-(-2l -\frac{1}{2})$ for some $1\leq l\leq n-1, \ l\in \mathbb{Z}$, then it will anticommute with any of the $\phi^D_{-2l-\frac{1}{2}}$, $1\leq l\leq n-1$, and thus $\phi^D_{-2i+\frac{1}{2}}\phi^D_{-2n+2-\frac{1}{2}}\dots \phi^D_{-2-\frac{1}{2}}\phi^D_{-\frac{1}{2}}|0\rangle=0$. If, on the other hand $-2i+ \frac{1}{2}=-(-2l -\frac{1}{2})$ for some $1\leq l\leq n-1, \ l\in \mathbb{Z}$, then we have
\begin{align*}
r_1(E_{i, i+k}) \phi^D_{-2n+2-\frac{1}{2}}&\dots \phi^D_{-2-\frac{1}{2}}\phi^D_{-\frac{1}{2}}|0\rangle \\
&=-\phi^D_{2i+2k-\frac{1}{2}}\phi^D_{-2i+\frac{1}{2}} \phi^D_{-2n+2-\frac{1}{2}}\dots \phi^D_{-2l+2-\frac{1}{2}}\dots \phi^D_{-\frac{1}{2}}|0\rangle\\
&=\pm\phi^D_{2i+2k-\frac{1}{2}} \phi^D_{-2n+2-\frac{1}{2}}\dots \widehat{\phi^D_{-2l-\frac{1}{2}}}\phi^D_{-2l+2-\frac{1}{2}}\dots \phi^D_{-\frac{1}{2}}|0\rangle.
\end{align*}
 But then from $-2i+ 1/2=-(-2l -\frac{1}{2})$, we have $i=-l$ and $2i+2k-\frac{1}{2} =2k-2l-\frac{1}{2}$, and we know from $i+k\leq 0$ that $-2l+2-\frac{1}{2}\leq 2i+2k-\frac{1}{2}\leq -\frac{1}{2}$. That means $\phi^D_{2i+2k-\frac{1}{2}}=\phi^D_{-2l_1-\frac{1}{2}}$ for  $0\leq l_1\leq l-1$   and
\begin{align*}
\phi^D_{2i+2k-\frac{1}{2}}& \phi^D_{-2n+2-\frac{1}{2}}\dots \widehat{\phi^D_{-2l-\frac{1}{2}}}\phi^D_{-2l+2-\frac{1}{2}}\dots \phi^D_{-\frac{1}{2}}|0\rangle\\
&=\phi^D_{2i+2k-\frac{1}{2}} \phi^D_{-2n+2-\frac{1}{2}}\dots \widehat{\phi^D_{-2l-\frac{1}{2}}}\phi^D_{-2l+2-\frac{1}{2}}\dots \phi^D_{-2l_1-\frac{1}{2}}\dots \phi^D_{-\frac{1}{2}}|0\rangle\\
&=\phi^D_{-2l_1-\frac{1}{2}} \phi^D_{-2n+2-\frac{1}{2}}\dots \widehat{\phi^D_{-2l-\frac{1}{2}}}\phi^D_{-2l+2-\frac{1}{2}}\dots \phi^D_{-2l_1-\frac{1}{2}}\dots \phi^D_{-\frac{1}{2}}|0\rangle\\
&=\pm \phi^D_{-2n+2-\frac{1}{2}}\dots \widehat{\phi^D_{-2l-\frac{1}{2}}} \phi^D_{-2l+2-\frac{1}{2}}\dots \phi^D_{-2l_1-\frac{1}{2}}\phi^D_{-2l_1-\frac{1}{2}}\dots \phi^D_{-\frac{1}{2}}|0\rangle =0;
\end{align*}
since $\phi^D_{-2l_1-\frac{1}{2}}\phi^D_{-2l_1-\frac{1}{2}}=0$.
\end{proof}
\begin{lem}\textbf(Calculating the weights)\label{highestweights}
  For all $i,n\in\mathbb Z$,
$r_1(E_{ii})v_n={^a}\hat\Lambda_n(E_{ii})v_n$.
\end{lem}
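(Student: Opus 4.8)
The plan is to evaluate the operator $r_1(E_{ii})=:\phi^D_{-2i+\frac{1}{2}}\phi^D_{2i-\frac{1}{2}}:$ directly on the vectors $v_n$ by first rewriting it as a signed number operator, and then comparing with the values of ${^a}\hat\Lambda_n(E_{ii})$. First I would remove the normal ordering using the prescription recorded just before \propref{singularvector}. Since the two modes pair off under the Clifford relation only when their indices sum to zero, which here they do, the computation splits into two cases. For $i\geq 1$ we have $2i-\frac{1}{2}>0$, so the prescription gives $r_1(E_{ii})=\phi^D_{-2i+\frac{1}{2}}\phi^D_{2i-\frac{1}{2}}$; for $i\leq 0$ we have $2i-\frac{1}{2}<0$, so the prescription gives $r_1(E_{ii})=\phi^D_{-2i+\frac{1}{2}}\phi^D_{2i-\frac{1}{2}}-1$, which by \eqref{eqn:Com-D} equals $-\phi^D_{2i-\frac{1}{2}}\phi^D_{-2i+\frac{1}{2}}$. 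In both cases the result is $\pm$ the number operator of a single creation mode: $\phi^D_{-2i+\frac{1}{2}}\phi^D_{2i-\frac{1}{2}}$ counts occupation of $\phi^D_{-2i+\frac{1}{2}}$ (for $i\geq 1$), while $-\phi^D_{2i-\frac{1}{2}}\phi^D_{-2i+\frac{1}{2}}$ counts, with a minus sign, occupation of $\phi^D_{2i-\frac{1}{2}}$ (for $i\leq 0$). I would verify directly from \eqref{eqn:Com-D} that such a number operator acts as the identity on every basis monomial containing the corresponding creation factor, and as $0$ on every monomial not containing it.

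With this reformulation the lemma becomes a question of which creation modes occur in each $v_n$. For $n>0$, $v_n$ is the product of the modes $\phi^D_{-2l+1-\frac{1}{2}}$ with $l=1,\dots,n$, which are exactly the modes $\phi^D_{-2i+\frac{1}{2}}$ for $i=1,\dots,n$; hence the $i\geq1$ operator returns $v_n$ precisely when $1\leq i\leq n$ and $0$ when $i>n$, while the $i\leq0$ operator annihilates $v_n$ because none of its factors is an even-indexed mode $\phi^D_{2i-\frac{1}{2}}$. For $n>0$, $v_{-n}$ is the product of the modes $\phi^D_{-2l+2-\frac{1}{2}}$ with $l=1,\dots,n$, which are exactly the modes $\phi^D_{2i-\frac{1}{2}}$ for $i=0,-1,\dots,-n+1$; hence the $i\leq0$ operator returns $-v_{-n}$ precisely when $-n+1\leq i\leq 0$ and $0$ otherwise, while the $i\geq1$ operator annihilates $v_{-n}$. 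For $v_0=|0\rangle$ both number operators give $0$. Matching these eigenvalues against ${^a}\hat\Lambda_n(E_{ii})={^a}\Lambda_n(E_{ii})$, which equals $1$ for $0<i\leq n$, equals $-1$ for $n<i\leq 0$, and is $0$ otherwise, gives the claimed identity in every case.

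The one point requiring care is the number-operator computation itself, namely the claim that the eigenvalue is exactly $+1$ (or $-1$) and not merely nonzero. Moving $\phi^D_{2i-\frac{1}{2}}$ rightward through the factors of the monomial to annihilate against its partner produces a sign $(-1)^{k-j}$, and moving $\phi^D_{-2i+\frac{1}{2}}$ back to reinstate that factor in its original slot produces the same sign, so the two cancel. I expect tracking these anticommutation signs to be the main obstacle; the remaining matching of index ranges is the same elementary bookkeeping of parities and bounds already used in \propref{singularvector}.
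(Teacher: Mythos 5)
Your proposal is correct and is essentially the paper's own argument: the paper likewise resolves the normal ordering into the two cases $i>0$ (giving $\phi^D_{-2i+\frac{1}{2}}\phi^D_{2i-\frac{1}{2}}$) and $i\leq 0$ (giving $-\phi^D_{2i-\frac{1}{2}}\phi^D_{-2i+\frac{1}{2}}$), acts on the monomials $v_n$, and checks when the annihilation mode pairs with a factor of $v_n$, with the two anticommutation signs $(-1)^{n-l}$ cancelling exactly as you describe. Your ``signed number operator'' phrasing is just a cleaner packaging of the same case-by-case bookkeeping, so there is nothing missing.
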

\begin{proof}  We want to prove
\begin{align*}
&r_1(E_{i, i})v_0=0\cdot v_0;  \\
&r_1(E_{i, i})v_n=1\cdot v_n \quad \text{for}\  0<i\leq n; \quad r_1(E_{i, i})v_n=1\cdot v_n, \quad \text{for} \ i>n\geq 1;\\
&r_1(E_{i, i})v_n=0\cdot v_n\quad \text{for} \ i\leq 0, \ n>0;\\
&r_1(E_{i, i})v_{-n}=-1\cdot v_{-n}\quad \text{for} \ -n+1\leq i\leq 0; \quad r_1(E_{i, i})v_{-n}=0\cdot v_{-n}\quad \text{for} \  -n+1> i > 0;\\
&r_1(E_{i, i})v_{-n}=0\cdot v_{-n}\quad \text{for}\ i>0, \ n>0.
\end{align*}
We have
\begin{align*}
r_1(E_{i, i})=:\phi^D_{-2i+\frac{1}{2}} \phi^D_{2i-\frac{1}{2}}:&=\begin{cases}\phi^D_{-2i+\frac{1}{2}} \phi^D_{2i-\frac{1}{2}},\quad \text{ for $i>0$,}  \\
-  \phi^D_{2i-\frac{1}{2}}\phi^D_{-2i+\frac{1}{2}},\quad \text{ for $i\leq 0$ }
\end{cases} \\
\end{align*}
For $n=0$,  $v_0=|0\rangle$, and $r_1(E_{i, i})= :\phi^D_{-2i+\frac{1}{2}} \phi^D_{2i-\frac{1}{2}}:$, and it is clear that $r_1(E_{i, i})|0\rangle =0$. \\
Let $n>0$ and $v_n=\phi^D_{-2n+1-\frac{1}{2}}\dots \phi^D_{-3-\frac{1}{2}}\phi^D_{-1-\frac{1}{2}}|0\rangle$.
If $i>0$, $\phi^D_{2i-\frac{1}{2}}$ from $r_1(E_{i, i})$ will anticommute with any $\phi^D_{-2l+1-\frac{1}{2}}$ in $v_n$, \textbf{except}  for $l$ such that $-2l+1-\frac{1}{2}=-(2i-\frac{1}{2})$, i.e., $l=i$. In the case there exist an $l$ such that
$l=i$, then
\begin{align*}
r_1(E_{i, i})v_n& =r_1(E_{i, i}) \phi^D_{-2n+1-\frac{1}{2}}\dots \phi^D_{-3-\frac{1}{2}}\phi^D_{-1-\frac{1}{2}}|0\rangle \\
&=\phi^D_{-2i+\frac{1}{2}} \phi^D_{2i-\frac{1}{2}}\phi^D_{-2n+1-\frac{1}{2}}\dots \phi^D_{-2l+1-\frac{1}{2}}\dots \phi^D_{-1-\frac{1}{2}}|0\rangle\\
&=(-1)^{n-l}\phi^D_{-2i+\frac{1}{2}} \phi^D_{-2n+1-\frac{1}{2}}\dots \phi^D_{2i-\frac{1}{2}}\phi^D_{-2l+1-\frac{1}{2}}\dots \phi^D_{-1-\frac{1}{2}}|0\rangle\\
&=(-1)^{n-l}\phi^D_{-2l+1-\frac{1}{2}} \phi^D_{-2n+1-\frac{1}{2}}\dots \widehat{\phi^D_{-2l+1-\frac{1}{2}}}\dots \phi^D_{ - 1-\frac{1}{2}}|0\rangle\\
&= \phi^D_{-2n+1-\frac{1}{2}}\dots \phi^D_{-2l+1-\frac{1}{2}}\dots \phi^D_{ - 1-\frac{1}{2}}|0\rangle =v_n
\end{align*}
Hence $r_1(E_{i, i})v_n=v_n$ for $i\leq 0$ when $n\geq i$ (i.e., exist an $l$ such that
$l=i$), and  $r_1(E_{i, i})v_n=0$ for $i\leq 0$ when $n<i$.

If $i\leq 0$, then $r_1(E_{i, i})=-\phi^D_{2i-\frac{1}{2}} \phi^D_{-2i+\frac{1}{2}}$ and $\phi^D_{-2i+\frac{1}{2}}$ will anticommute with \textbf{any}  $\phi^D_{-2l+1-\frac{1}{2}}$ in $v_{n}$, as it is impossible to have  $-2l+1-\frac{1}{2}=-(-2i+\frac{1}{2})$. Hence
$r_1(E_{i, i})v_n=0$ for $i\leq 0$.

Let $n>0$ and $v_{-n}=\phi^D_{-2n+2-\frac{1}{2}}\dots \phi^D_{-2-\frac{1}{2}}\phi^D_{-\frac{1}{2}}|0\rangle$.
If $i\leq 0$, $\phi^D_{-2i+\frac{1}{2}}$ from $r_1(E_{i, i})$ will anticommute with any $\phi^D_{-2l+2-\frac{1}{2}}$ in $v_{-n}$, \textbf{except}  for $l$ such that $-2l+2-\frac{1}{2}=-(2i+\frac{1}{2})$, i.e., $l=-i-1$. In the case there exist an $l$ such that
$l=-i-1$, then
\begin{align*}
r_1(E_{i, i})v_n& =r_1(E_{i, i}) \phi^D_{-2n+2-\frac{1}{2}}\dots \phi^D_{-2-\frac{1}{2}}\phi^D_{-\frac{1}{2}}|0\rangle \\
&=-\phi^D_{2i-\frac{1}{2}} \phi^D_{-2i+\frac{1}{2}}\phi^D_{-2n+2-\frac{1}{2}}\dots \phi^D_{-2l+2-\frac{1}{2}}\dots \phi^D_{-\frac{1}{2}}|0\rangle\\
&=-(-1)^{n-l}\phi^D_{2i-\frac{1}{2}} \phi^D_{-2n+2-\frac{1}{2}}\dots \phi^D_{-2i+\frac{1}{2}}\phi^D_{-2l+2-\frac{1}{2}}\dots \phi^D_{-\frac{1}{2}}|0\rangle\\
&=-(-1)^{n-l}\phi^D_{-2l+2-\frac{1}{2}} \phi^D_{-2n+2-\frac{1}{2}}\dots \widehat{\phi^D_{-2l+2-\frac{1}{2}}}\dots \phi^D_{-\frac{1}{2}}|0\rangle\\
&= -\phi^D_{-2n+2-\frac{1}{2}}\dots \phi^D_{-2l+2-\frac{1}{2}}\dots \phi^D_{-\frac{1}{2}}|0\rangle =v_n
\end{align*}
Hence $r_1(E_{i, i})v_{-n}=-v_{-n}$ for $i\leq 0$ when $n\geq -i-1$ (i.e., exist an $l$ such that
$l=-i-1$), and  $r_1(E_{i, i})v_{-n}=0$ for $i\leq 0$ when $n<-i-1$.

If $i>0$, then $r_1(E_{i, i})=\phi^D_{-2i+\frac{1}{2}} \phi^D_{2i-\frac{1}{2}}$ and  $\phi^D_{2i-\frac{1}{2}}$ will anticommute with \textbf{any}  $\phi^D_{-2l+2-\frac{1}{2}}$, as it is impossible to have  $-2l+2-\frac{1}{2}=-(2i-\frac{1}{2})$. Hence
$r_1(E_{i, i})v_{-n}=0$ for $i> 0$.
\end{proof}
\begin{prop}\label{generating}  For any $n\in\mathbb Z$, $\mathit{F_{(n)}^{\ten \frac{1}{2}}}$ is an $a_\infty$-submodule of $\mathit{F^{\ten \frac{1}{2}}}$ and $\mathit{F_{(n)}^{\ten \frac{1}{2}}}=U(a_{\infty}^-)v_n$.
\end{prop}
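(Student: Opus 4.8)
The plan is to prove the two assertions separately: first that each graded piece $\mathit{F_{(n)}^{\ten \frac{1}{2}}}$ is stable under $r_1$, and second that it is exhausted by $U(a_\infty^-)v_n$. For the submodule statement I would show that the whole action is homogeneous of degree $0$ for the grading $dg$. By \lemref{grading}, left multiplication by $\phi^D_{-2m+\frac{1}{2}}$ is homogeneous of degree $+1$ (its index has the form $2k+\tfrac12$) and by $\phi^D_{2n-\frac{1}{2}}$ of degree $-1$ (its index has the form $2k-\tfrac12$), so each generator $r_1(E_{m,n})=\,:\phi^D_{-2m+\frac{1}{2}}\phi^D_{2n-\frac{1}{2}}:$ is a degree-$0$ operator (the normal-ordering correction is a scalar multiple of the identity, hence degree $0$), and $c\mapsto \mathrm{Id}$ is degree $0$ as well. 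Thus $r_1(a_\infty)$ preserves each $\mathit{F_{(n)}^{\ten \frac{1}{2}}}$, which gives the submodule claim and, since $v_n\in\mathit{F_{(n)}^{\ten \frac{1}{2}}}$, the inclusion $U(a_\infty^-)v_n\subseteq \mathit{F_{(n)}^{\ten \frac{1}{2}}}$.

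Because \propref{singularvector} and \lemref{highestweights} show that $v_n$ is a highest weight vector (it is annihilated by $a_\infty^+$ and is a weight vector for $a_\infty^0$), the PBW factorization $U(a_\infty)=U(a_\infty^-)U(a_\infty^0)U(a_\infty^+)$ collapses $U(a_\infty)v_n$ to $U(a_\infty^-)v_n$. Hence only the reverse inclusion $\mathit{F_{(n)}^{\ten \frac{1}{2}}}\subseteq U(a_\infty^-)v_n$ remains, and I would obtain it by analysing the three types of strictly lower-triangular generators on basis vectors, encoding a basis vector by its set of odd and its set of even creation indices. One checks that for $i\ge 1,\ j\le 0$ the operator $r_1(E_{i,j})$ is a pure product of two creation operators (no contraction occurs since $i\ne j$) adjoining one odd index $2i-1$ and one even index $-2j$, increasing $\tilde L$ by $2$; that for $i>j\ge 1$ it raises an occupied odd index $2j-1$ into the empty slot $2i-1$, leaving $\tilde L$ fixed; and that for $0\ge i>j$ it raises an occupied even index $-2i$ to $-2j$, again leaving $\tilde L$ fixed.

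With these three moves I would prove $\mathit{F_{(n)}^{\ten \frac{1}{2}}}\subseteq U(a_\infty^-)v_n$ by induction on $\tilde L$, treating $n\ge 0$ explicitly and $n<0$ by the symmetric argument that exchanges the roles of odd and even indices. Let $q$ denote the number of even-indexed creation operators in a given basis vector. In the base case $\tilde L=|n|$ (so $q=0$, only odd indices) I would reach an arbitrary configuration $\{a_1<\dots<a_n\}$ of odd indices from $v_n=\{1,3,\dots,2n-1\}$ using the second move, processing indices from the top downward: since always $a_j\ge 2j-1$, each required step is a genuine raising into a currently empty slot, so no collision occurs and each application contributes only a nonzero sign. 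For the inductive step, given a length-$(n+2q)$ basis vector $w$ with odd index set $O$ and even index set $E$ (both nonempty once $q\ge 1$), I would choose $a\in O$ and $b\in E$, delete them to form a shorter basis vector $w'$ lying in $U(a_\infty^-)v_n$ by induction, and recover $w=\pm\, r_1(E_{i,j})w'$ with $2i-1=a$ and $-2j=b$ via the first (length-raising) move.

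The main obstacle is exactly this combinatorial bookkeeping in the last inclusion: one must order the moves so that every creation operator lands in an unoccupied mode, since otherwise the operator annihilates the vector, and one must track the anticommutation signs produced when reordering the $\phi^D$ factors. The saving feature is that all these signs are nonzero scalars $\pm 1$, so they never affect the spanning conclusion; thus the argument reduces to the purely combinatorial verification that the three moves, applied in the prescribed order, connect $v_n$ to every basis vector of $\mathit{F_{(n)}^{\ten \frac{1}{2}}}$.
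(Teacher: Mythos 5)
Your proposal is correct and is essentially the paper's own argument: you prove the submodule claim via the degree-$0$ property of $r_1(E_{ij})$ coming from \lemref{grading}, and you obtain $\mathit{F_{(n)}^{\ten \frac{1}{2}}}\subseteq U(a_{\infty}^-)v_n$ by recognizing quadratic monomials in the fermion modes as $r_1$ of strictly lower-triangular elementary matrices --- your length-raising move is exactly the paper's pairing of each even creation index with an odd one, $\phi^D_{-n_s^o-\frac{1}{2}}\phi^D_{-n_s^e-\frac{1}{2}}=r_1(E_{p_s,-q_s})$, applied to the vacuum. The only real difference is bookkeeping: the paper gives a one-shot factorization for $n=0$ and declares the remaining $n$ ``very similar,'' whereas your induction on $\tilde{L}$, together with the odd-to-odd and even-to-even replacement moves needed in the minimal-length base case, spells out precisely the detail required to handle the unpaired indices for general $n$.
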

\begin{proof}  First, $\mathit{F_{(n)}^{\ten \frac{1}{2}}}$ is an $a_\infty$-submodule of $\mathit{F^{\ten \frac{1}{2}}}$ as $r_1(E_{ij})$ is a homogenous operator acting on $\mathit{F^{\ten \frac{1}{2}}}$ of degree $0$ with respect to the grading given in \lemref{grading} and in \eqnref{grading2}.
Then certainly that gives us $U(a_{\infty}^-)v_n\subseteq \mathit{F_{(n)}^{\ten \frac{1}{2}}}$.
The proof that  $\mathit{F_{(n)}^{\ten \frac{1}{2}}}=U(a_{\infty}^-)v_n$ is similar for each $n\in \mathbb{Z}$, thus we will show it only for $n=0$. Let $v\in \mathit{F_{(0)}^{\ten \frac{1}{2}}}$, without loss of generality here we can assume $v$ is homogeneous, i.e. $v=\phi^D_{-n_k-\frac{1}{2}}\dots \phi^D_{-n_2-\frac{1}{2}}\phi^D_{-n_1-\frac{1}{2}}|0\rangle$. Since $v\in \mathit{F_{(0)}^{\ten \frac{1}{2}}}$, we have $k=2l$ and precisely half of the indexes $n_1, n_2, \dots, n_{k}$ are even, the other half are odd.
Thus we can write after eventual use of the anticommutation relations in $\mathit{Cl_D}$
\[
v=\pm \phi^D_{-n_l^o-\frac{1}{2}}\phi^D_{-n_l ^{e} -\frac{1}{2}}\cdots \phi^D_{-n_1^o-\frac{1}{2}}\phi^D_{-n_1^e-\frac{1}{2}}|0\rangle,
\]
i.e., we have rearranged the factors in pairs, so that the indexes in the pairs are $n_s^e$ and $n_s^o$, and we have $n_l^o>\dots >n_1^o$, $n_l^e>\dots >n_1^e$.
Hence we have $n_s^e=2 q_s$ for some $q_s\in \mathbb{Z}$ and $n_s^o=2p_s-1$ for some $p_s\in \mathbb{Z}$.
Thus
\begin{equation*}
v=\pm \phi^D_{-n_l^o-\frac{1}{2}}\phi^D_{-n_l^e-\frac{1}{2}}\cdots \phi^D_{-n_1^o-\frac{1}{2}}\phi^D_{-n_1^e-\frac{1}{2}}|0\rangle\\
=\pm r_1(E_{p_l, -q_l})\cdots r_1(E_{p_1, -q_1}|0\rangle;
\end{equation*}
which proves that $\mathit{F_{(0)}^{\ten \frac{1}{2}}}=U(a_{\infty}^-)v_0$. The proof that $\mathit{F_{(n)}^{\ten \frac{1}{2}}}=U(a_{\infty}^-)v_n$ is very similar for the other $n\in \mathbb{Z}$ and we will omit it.
\end{proof}
\begin{prop}\label{irreducibleFn}
 For any $n\in \mathbb{Z}$, \ $\mathit{F_{(n)}^{\ten \frac{1}{2}}}$  is an irreducible submodule for the representation $r_1$ of $a_{\infty}$ inside $\mathit{F^{\ten \frac{1}{2}}}$.
Moreover for any $n\in\mathbb Z$, $\mathit{F_{(n)}^{\ten \frac{1}{2}}}\cong L( a_\infty; ^a\!\hat\Lambda_n, 1)$.
\end{prop}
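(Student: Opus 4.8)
The plan is to read off from the preceding results that $v_n$ is a highest weight vector and that $\mathit{F_{(n)}^{\ten \frac{1}{2}}}$ is the highest weight module it generates, and then to prove irreducibility; once $\mathit{F_{(n)}^{\ten \frac{1}{2}}}$ is known to be irreducible with highest weight ${}^a\!\hat\Lambda_n$ and central charge $1$, it coincides with $L(a_\infty; {}^a\!\hat\Lambda_n, 1)$ by definition. First I would assemble the highest weight data: \propref{singularvector} gives $a_\infty^+ v_n = 0$, \lemref{highestweights} gives $r_1(E_{ii})v_n = {}^a\!\hat\Lambda_n(E_{ii})v_n$ for every $i\in\mathbb{Z}$, and since $r_1(c)=\mathrm{Id}$ by \propref{prop:r_1} while ${}^a\!\hat\Lambda_n(c)=1$, the central element acts by the correct scalar as well. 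Hence $v_n$ has $a_\infty^0$-weight exactly ${}^a\!\hat\Lambda_n$. By \propref{generating}, $\mathit{F_{(n)}^{\ten \frac{1}{2}}}=U(a_\infty^-)v_n$, so the universal property of the Verma module produces a surjection $M(a_\infty; {}^a\!\hat\Lambda_n, 1)\twoheadrightarrow \mathit{F_{(n)}^{\ten \frac{1}{2}}}$, and it remains only to show the target has no nonzero proper submodule.

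For irreducibility I would use a contravariant Hermitian form. Introduce on $\mathit{F^{\ten \frac{1}{2}}}$ the natural positive-definite form $\langle\,\cdot\,,\,\cdot\,\rangle$, normalized so that $|0\rangle$ is a unit vector and determined by requiring $\phi^D_{-m}$ to be the adjoint of $\phi^D_m$ for all $m\in\mathbb{Z}+\frac{1}{2}$; on the monomial basis of $\mathit{F^{\ten \frac{1}{2}}}$ this form is positive definite, and the homogeneous components $\mathit{F_{(n)}^{\ten \frac{1}{2}}}$ are mutually orthogonal because they carry distinct $a_\infty^0$-weights. The essential point to verify is that this form is contravariant with respect to the compact anti-involution $\omega(E_{ij})=E_{ji}$: starting from \eqref{ainftyaction} and $(\phi^D_m)^\dagger=\phi^D_{-m}$, one computes that the adjoint of $r_1(E_{ij})=:\phi^D_{-2i+\frac{1}{2}}\phi^D_{2j-\frac{1}{2}}:$ equals $r_1(E_{ji})=r_1(\omega(E_{ij}))$, the normal-ordering constant being real and central and hence harmless.

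Granting contravariance the conclusion is formal: on any highest weight module the radical of a nonzero contravariant form is its unique maximal proper submodule, so positive-definiteness (hence nondegeneracy of the restriction to $\mathit{F_{(n)}^{\ten \frac{1}{2}}}$) forces that radical to vanish and $\mathit{F_{(n)}^{\ten \frac{1}{2}}}$ to be irreducible. Consequently the surjection from the Verma module factors through an isomorphism $\mathit{F_{(n)}^{\ten \frac{1}{2}}}\cong L(a_\infty; {}^a\!\hat\Lambda_n, 1)$, which is the assertion. Equivalently, positive-definiteness gives complete reducibility, and since the top weight space is one-dimensional---spanned by the unique minimal-length, minimal-index vector $v_n$ singled out after its definition---a completely reducible highest weight module must be irreducible.

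The step I expect to be the main obstacle is the contravariance computation: one must keep track of the fermionic signs and of the normal-ordering correction carefully enough to see that the adjoint of $r_1(E_{ij})$ is exactly $r_1(E_{ji})$, with no leftover lower-order term. If one prefers to bypass the Hermitian form, irreducibility can instead be proved by a direct combinatorial descent: given a nonzero submodule $W$, choose a homogeneous $w\in W$ and apply raising operators $r_1(E_{i,i+k})$ with $k\geq 1$, in the same spirit as the computations in \propref{singularvector} and \lemref{highestweights}, to strictly decrease the length $\tilde{L}$ until one lands on a nonzero multiple of $v_n$; since $\mathit{F_{(n)}^{\ten \frac{1}{2}}}=U(a_\infty^-)v_n$ this yields $W=\mathit{F_{(n)}^{\ten \frac{1}{2}}}$. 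This alternative is self-contained but demands careful bookkeeping of exactly which raising operators reduce the length, which is where the real work lies.
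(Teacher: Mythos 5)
Your proposal is correct and takes essentially the same approach as the paper: you assemble the highest weight data from \propref{singularvector}, \lemref{highestweights} and \propref{generating}, then prove irreducibility via the positive-definite Hermitian form coming from $\omega(\phi^D_m)=\phi^D_{-m}$, whose contravariance with respect to the compact anti-involution (the identity $\omega(r_1(E_{ij}))=r_1(E_{ji})=r_1(\omega(E_{ij}))$) forces the radical --- the unique maximal proper submodule --- to vanish, which is exactly the paper's argument (stated there by pulling the form back to the Verma module $V(a_\infty; {}^a\!\hat\Lambda_n,1)$). Your fallback combinatorial descent is precisely the alternate proof given in the paper's Appendix.
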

\begin{proof}
This proof uses the uniqueness property of the contragradient Hermitian symmetric form on the Verma module $V( a_\infty; ^a\!\hat\Lambda_n, 1)$ (see \cite{Jantzen}, \cite{Kac-Lie} or \cite{Moody}). For a more direct, but calculational proof the reader can see the Appendix.
   It is well known that one can define $\omega$ the conjugate linear involutive anti-automorphism on $\mathit{Cl_D}$ by $\omega(\phi_m^D)=\phi_{-m}^D$ for all $m\in\mathbb Z +\frac{1}{2}$ and $\omega(1)=  1$.   Recall the module $\mathit{F^{\ten \frac{1}{2}}}$ is defined to be the induced module
$$
\mathit{F^{\ten \frac{1}{2}}}=\mathit{Cl_D}\otimes_{\mathbb C  1\otimes \mathit{Cl_D}_+}\mathbb C|0\rangle
$$
whereby $ 1|0\rangle=|0\rangle$ and $ \mathit{Cl_D}_+|0\rangle=0$.   The conjugate linear involutive anti-automorphism (or antilinear antiautomorphism) $\omega:\mathit{Cl_D}\to \mathit{Cl_D}$ defined by $\omega(\phi_{n})=\phi_{-n}$  gives rise to a non-degenerate positive definite form $\langle \enspace|\enspace\rangle$ defined on $\mathit{F^{\ten \frac{1}{2}}}$ whereby
$$
\langle Xv\,|\,w\rangle = \langle v\,|\,\omega(X)w\rangle
$$
for all $v,w\in\mathit{F^{\ten \frac{1}{2}}}$ and $X\in \mathit{Cl_D}$.    Observe that $\mathit{F^{\ten \frac{1}{2}}}_{(n)}\perp \mathit{F^{\ten \frac{1}{2}}}_{(m)}$ for $m\neq n$.


It is straightforward to check that
\begin{align}
\omega(r_1(E_{m,n}))
&=r_1(E_{n,m})=r_1(\omega(E_{m,n})). \label{intertwinningid}
\end{align}
Thus $\omega$ defined on $\mathit{Cl_D}$ agrees with the compact anti-involution defined on $a_\infty$ given earlier.

We have from  \propref{singularvector}, \lemref{highestweights}, and \propref{generating} that $\mathit{F_{(n)}^{ \ten\frac{1}{2}}}$ is a highest weight $a_\infty$-module.  By the universal mapping property of Verma modules there exists an $a_\infty$-module homomorphism $\pi:V( a_\infty; ^a\!\hat\Lambda_n, 1)\to \mathit{F_{(n)}^{ \ten\frac{1}{2}}}$ sending the highest weight vector $v_{^a\!\hat\Lambda_n}$ of the Verma module $V( a_\infty; ^a\!\hat\Lambda_n, 1)$ to  $v_n$ in $\mathit{F_{(n)}^{ \ten\frac{1}{2}}}$.
Moreover the Hermitian symmetric $\omega$-contragradient form $\langle\enspace,\enspace\rangle$ on $\mathit{F_{(n)}^{ \ten\frac{1}{2}}}$ pulls back to a Hermitian symmetric form $(\enspace,\enspace)$ on  the Verma module $V( a_\infty; ^a\!\hat\Lambda_n, 1)$.   In other words
\begin{equation}
(uv_{^a\!\hat\Lambda_n},u'v_{^a\!\hat\Lambda_n}):=\langle \pi(uv_{^a\!\hat\Lambda_n}),\pi(u'v_{^a\!\hat\Lambda_n})\rangle=\langle r_1(u)v_n,r_1(u')v_n\rangle.
\end{equation}
for all $u,u'\in a_\infty$.
By \eqnref{intertwinningid}, we have
\begin{align*}
(Xuv_{^a\!\hat\Lambda_n},u'v_{^a\!\hat\Lambda_n})&=\langle r_1(X)r_1(u)v_n,r_1(u')v_n\rangle\\
&=\langle r_1(u)v_n,\omega(r_1(X))r_1(u')v_n\rangle\\
&=\langle r_1(u)v_n,r_1(\omega(X) u')v_n\rangle\\
&=(uv_{^a\!\hat\Lambda_n},\omega(X)u'v_{^a\!\hat\Lambda_n})
\end{align*}
Hence $(\enspace,\enspace)$ is contragradient with respect to $\omega$.

Now it is known that there is a unique contragradient Hermitian symmetric form on the Verma module $V( a_\infty; ^a\!\hat\Lambda_n, 1)$ with $(v_{^a\!\hat\Lambda_n},v_{^a\!\hat\Lambda_n})=\langle v_n, v_n\rangle$  and its radical is precisely the unique maximal submodule $\overline{V( a_\infty; ^a\!\hat\Lambda_n, 1)}$ of $V( a_\infty; ^a\!\hat\Lambda_n, 1)$ (see \cite{Jantzen}, \cite{Kac-Lie} or \cite{Moody}).      So
\begin{equation}
0=(\overline{V( a_\infty; ^a\!\hat\Lambda_n, 1)},u'v_{^a\!\hat\Lambda_n}):=\langle \pi(\overline{V( a_\infty; ^a\!\hat\Lambda_n, 1)} ),\pi(u'v_{^a\!\hat\Lambda_n})\rangle\end{equation}
for all $u'\in U(a_\infty)$.   Since $\langle \enspace,\enspace \rangle$ is nondegenerate and $\pi$ is surjective one must have $\overline{V( a_\infty; ^a\!\hat\Lambda_n, 1)}\subseteq\ker \pi$.
 Thus $L( a_\infty; ^a\!\hat\Lambda_n, 1)=V( a_\infty; ^a\!\hat\Lambda_n, 1)/\overline{V( a_\infty; ^a\!\hat\Lambda_n, 1)}\cong \mathit{F^{\ten \frac{1}{2}}}_{(n)}$.
 \end{proof}
\begin{remark}
An alternative proof of the irreducibility of  $\mathit{F_{(n)}^{\ten \frac{1}{2}}}$ can be given as follows. In \cite{AngTVA} we considered the Heisenberg algebra  $\mathcal{H}_{\mathbb{Z}}$, which is a subalgebra of   $a_{\infty}$ represented by
\[
h_n\mapsto \sum_{i\in \mathbb{Z}} E_{i, i+n}
\]
In \cite{AngTVA} we prove that each $ \mathit{F_{(n)}^{\ten \frac{1}{2}}}$ is irreducible under $\mathcal{H}_{\mathbb{Z}}$. Since $\mathcal{H}_{\mathbb{Z}}$ is a subalgebra of   $a_{\infty}$, then $ \mathit{F_{(n)}^{\ten \frac{1}{2}}}$ is irreducible under $a_{\infty}$.
\end{remark}
The previous three propositions can now be combined in the following
\begin{thm}\label{thm:main}
As $a_{\infty}$ modules with central charge $c=1$ $ \mathit{F_{(n)}^{\ten \frac{1}{2}}}\cong L( a_\infty; ^a\!\hat\Lambda_n, 1)$ and  $\mathit{F^{\ten \frac{1}{2}}}$ decomposes into irreducible submodules as follows:
\[
\mathit{F^{\ten \frac{1}{2}}}\cong \oplus_{n\in \mathbb{Z}} L(a_\infty; ^a\!\hat\Lambda_n, 1).
\]
Hence $\mathit{F^{\ten \frac{1}{2}}}\cong \mathit{F^{\ten 1}}$ as $a_{\infty}$ modules with central charge $c=1$.
\end{thm}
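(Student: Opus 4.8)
The plan is to assemble the preceding propositions into the stated module decomposition and then to compare it with the known decomposition of the charged free fermion Fock space. First I would recall the vector space decomposition $\mathit{F^{\ten \frac{1}{2}}} = \oplus _{n\in \mathbb{Z}} \mathit{F_{(n)}^{\ten \frac{1}{2}}}$ associated with the grading $dg$ from \eqnref{grading2}. By \propref{generating} each homogeneous component $\mathit{F_{(n)}^{\ten \frac{1}{2}}}$ is an $a_\infty$-submodule for $r_1$, since $r_1(E_{ij})$ acts as a degree-$0$ operator with respect to $dg$; hence this direct sum is a decomposition in the category of $a_\infty$-modules, not merely of vector spaces. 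Invoking \propref{irreducibleFn} to identify each summand as $\mathit{F_{(n)}^{\ten \frac{1}{2}}}\cong L( a_\infty; ^a\!\hat\Lambda_n, 1)$ then gives at once
\[
\mathit{F^{\ten \frac{1}{2}}}\cong \oplus_{n\in \mathbb{Z}} L(a_\infty; ^a\!\hat\Lambda_n, 1),
\]
which is the first assertion of the theorem together with its refinement into irreducibles. I should emphasize that \propref{irreducibleFn} itself rests on \propref{singularvector} and \lemref{highestweights} (establishing that $v_n$ is a highest weight vector of weight $^a\!\hat\Lambda_n$) and on \propref{generating} (establishing that $v_n$ generates its whole component), so all the substantive work is already done by the time we reach this stage.

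For the final isomorphism $\mathit{F^{\ten \frac{1}{2}}}\cong \mathit{F^{\ten 1}}$, I would simply compare the decomposition just obtained with the classically known one recorded in Remark \ref{remark:F^1}, namely $\mathit{F^{\ten 1}}\cong \oplus_{n\in \mathbb{Z}} L(a_\infty; ^a\!\hat\Lambda_n, 1)$. Since both Fock spaces decompose as the same direct sum of the same irreducible highest weight modules $L(a_\infty; ^a\!\hat\Lambda_n, 1)$, they are isomorphic as $a_\infty$-modules with central charge $c=1$.

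There is no serious obstacle remaining at this point, as the analytic content has been absorbed into the earlier results; the only step requiring care is the observation that the grading decomposition is one of $a_\infty$-modules, i.e.\ that $r_1$ preserves $dg$, which is exactly what \propref{generating} supplies. One conceptual point worth flagging, and indeed the whole motivation of the paper, is that this isomorphism is claimed only as $a_\infty$-modules (equivalently as graded vector spaces) and is emphatically \emph{not} an isomorphism respecting the two distinct vertex-algebra structures on $\mathit{F^{\ten \frac{1}{2}}}$ and $\mathit{F^{\ten 1}}$.
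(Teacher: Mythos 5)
Your proposal is correct and follows essentially the same route as the paper, which proves the theorem precisely by combining \propref{singularvector}, \lemref{highestweights}, \propref{generating}, and \propref{irreducibleFn} with the known decomposition $\mathit{F^{\ten 1}}\cong \oplus_{n\in \mathbb{Z}} L(a_\infty; {}^a\!\hat\Lambda_n, 1)$ recalled in Remark~\ref{remark:F^1}. Your added care in noting that the grading decomposition is a decomposition of $a_\infty$-modules (because $r_1$ preserves $dg$) matches the paper's reasoning exactly.
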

Here $\mathit{F^{\ten 1}}$ denotes the fermionic Fock space of the charged free fermions (see Remark \ref{remark:F^1}  above, after \cite{Frenkel-BF},  \cite{KacRaina}, \cite{Wang}, \cite{Kac}, \cite{WangKac}, \cite{WangDual}).

Since $\mathit{F^{\ten \frac{1}{2}}}\cong \mathit{F^{\ten 1}}$ as $a_{\infty}$ modules with central charge $c=1$, we can use Theorem 3.2 of \cite{WangDuality} to get the decomposition of the irreducible central charge $c=\frac{1}{2}$ $d_{\infty}$ modules $\mathit{F_{\bar{0}}^{\ten \frac{1}{2}}}$ and $\mathit{F_{\bar{1}}^{\ten \frac{1}{2}}}$  in terms of the new $d_{\infty}$ central charge $c=1$  action:
\begin{cor}
As $d_{\infty}$ modules with central charge $c=1$  we have
\begin{align}
&\mathit{F_{(n)}^{\ten \frac{1}{2}}}\cong L( d_\infty; ^{\frak{dd}}\!\hat\Lambda_n, 1)\quad \text{for} \ n\neq 0
&\mathit{F_{(0)}^{\ten \frac{1}{2}}}\cong L( d_\infty; ^{\frak{dd}}\!\hat\Lambda_0, 1)\oplus L( d_\infty; ^{\frak{dd}}\!\hat\Lambda_{det}, 1);
\end{align}
where for $n\neq 0$  $\mathit{F_{(n)}^{\ten \frac{1}{2}}}$ has highest weight vector $v_n$, and $\mathit{F_{(0)}^{\ten \frac{1}{2}}}$ decomposes into two irreducible highest weight modules, with highest weight vectors $v_0$ and $\tilde{v}_0=\phi^D_{-\frac{3}{2}}\phi^D_{-\frac{1}{2}}|0\rangle\in \mathit{F_{(0)}^{\ten \frac{1}{2}}}$ (note $\tilde{v}_0$ is not a highest weight vector for the $a_{\infty}$ action, only the $d_{\infty}$ action).
Thus as $d_{\infty}$ modules with central charge $c=1$
\[
\mathit{F^{\ten \frac{1}{2}}}=  \left(\oplus_{\substack{n\in \mathbb{Z}\\ n\neq 0}}L( d_\infty; ^{\frak{dd}}\!\hat\Lambda_n, 1)\right)\bigoplus \left(L( d_\infty; ^{\frak{dd}}\!\hat\Lambda_0, 1)\oplus L( d_\infty; ^{\frak{dd}}\!\hat\Lambda_{det}, 1)\right),
\]
and
\begin{align*}
\mathit{F_{\bar{1}}^{\ten \frac{1}{2}}}&= \oplus_{n\in \mathbb{Z}}L( d_\infty; ^{\frak{dd}}\!\hat\Lambda_{2n-1}, 1);\\
\mathit{F_{\bar{0}}^{\ten \frac{1}{2}}}&= \left(\oplus_{\substack{n\in \mathbb{Z}\\ n\neq 0}}L( d_\infty; ^{\frak{dd}}\!\hat\Lambda_{2n}, 1)\right)\bigoplus \left(L( d_\infty; ^{\frak{dd}}\!\hat\Lambda_0, 1)\oplus L( d_\infty; ^{\frak{dd}}\!\hat\Lambda_{det}, 1)\right).
\end{align*}
\end{cor}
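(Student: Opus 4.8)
The plan is to leverage \thmref{thm:main}, which already identifies $\mathit{F_{(n)}^{\ten\frac{1}{2}}}\cong L(a_\infty;{}^a\hat\Lambda_n,1)$ as $a_\infty$-modules, and to read off the $d_\infty$-module structure by restriction along the inclusion $d_\infty\subset a_\infty$. Since the central element $c$ of $d_\infty$ is identified with that of $a_\infty$, the representation $r_1$ restricts to a central charge $c=1$ representation of $d_\infty$, so each $\mathit{F_{(n)}^{\ten\frac{1}{2}}}$ becomes a $d_\infty$-module of central charge $1$. The branching of an irreducible highest weight $a_\infty$-module to the folded subalgebra $d_\infty$ is exactly the content of Theorem 3.2 of \cite{WangDuality}; my first step is therefore to match our highest weights ${}^a\hat\Lambda_n$ and our realization of $d_\infty$ (the folding $i\leftrightarrow 1-i$) to Wang's conventions, so that that branching rule applies verbatim.

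Next I would pin down the $d_\infty$-highest weight vectors explicitly, to make the isomorphisms canonical rather than merely abstract. The positive part $d_\infty^+$ sits inside $a_\infty^+$, since a strictly upper triangular matrix of $\bar d_\infty$ is strictly upper triangular in $\bar a_\infty$; hence \propref{singularvector} immediately gives $d_\infty^+ v_n=0$ for every $n$, i.e. each $v_n$ is still a $d_\infty$-highest weight vector. Restricting the computation of \lemref{highestweights} to the Cartan elements $E_{ii}-E_{1-i,1-i}$ of $d_\infty$ yields the $d_\infty$-weight of $v_n$, which I expect to match ${}^{\frak{dd}}\hat\Lambda_n$. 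For $n\neq 0$ the branching rule of \cite{WangDuality} says the restricted module stays irreducible, giving $\mathit{F_{(n)}^{\ten\frac{1}{2}}}\cong L(d_\infty;{}^{\frak{dd}}\hat\Lambda_n,1)$ directly.

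The genuinely new phenomenon, and the \emph{main obstacle}, is the case $n=0$, where the $a_\infty$-irreducible module $\mathit{F_{(0)}^{\ten\frac{1}{2}}}$ becomes reducible upon restriction and splits into two $d_\infty$-irreducibles. The key is to exhibit the second highest weight vector $\tilde v_0=\phi^D_{-\frac{3}{2}}\phi^D_{-\frac{1}{2}}|0\rangle$ and to verify, by the same mode bookkeeping used in \propref{singularvector} together with the explicit formula $r_1(E_{i,i+k})=:\phi^D_{-2i+\frac{1}{2}}\phi^D_{2i+2k-\frac{1}{2}}:$, that $r_1$ of every upper triangular element of $d_\infty$ annihilates $\tilde v_0$, while at the same time $\tilde v_0$ is \emph{not} annihilated by the $a_\infty$-raising operator $r_1(E_{0,1})=\phi^D_{\frac{1}{2}}\phi^D_{\frac{3}{2}}$ (whose folded combination $E_{0,1}-E_{0,1}$ vanishes, so that it does not lie in $d_\infty$). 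This confirms that $\tilde v_0$ is a $d_\infty$-highest weight vector invisible to the $a_\infty$-action. Computing its Cartan weight should give ${}^{\frak{dd}}\hat\Lambda_{det}$, and the irreducibility of each of the two resulting summands together with exhaustion of $\mathit{F_{(0)}^{\ten\frac{1}{2}}}=U(d_\infty^-)v_0\oplus U(d_\infty^-)\tilde v_0$ then again follows from Theorem 3.2 of \cite{WangDuality}. The delicate points are the precise identification of the $\tilde v_0$-weight with ${}^{\frak{dd}}\hat\Lambda_{det}$ and checking that these two submodules are disjoint and span, which is where reconciling our folding conventions with Wang's is unavoidable.

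Finally I would assemble the $\mathbb{Z}_2$-graded pieces. A short parity computation shows that $r_1(E_{ij})=:\phi^D_{-2i+\frac{1}{2}}\phi^D_{2j-\frac{1}{2}}:$ changes the length $\tilde L$ by $0$ or $\pm 2$, hence preserves length parity; since $v_n$ has length $|n|\equiv n \pmod 2$, the whole module $\mathit{F_{(n)}^{\ten\frac{1}{2}}}=U(a_\infty^-)v_n$ (by \propref{generating}) lies in $\mathit{F_{\bar 0}^{\ten\frac{1}{2}}}$ when $n$ is even and in $\mathit{F_{\bar 1}^{\ten\frac{1}{2}}}$ when $n$ is odd. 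Collecting the odd-index summands, reindexed by $n\mapsto 2n-1$, gives the stated decomposition of $\mathit{F_{\bar 1}^{\ten\frac{1}{2}}}$, while the even-index summands, reindexed by $n\mapsto 2n$, together with the two pieces of the $n=0$ summand give the decomposition of $\mathit{F_{\bar 0}^{\ten\frac{1}{2}}}$, completing the corollary.
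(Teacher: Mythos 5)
Your proposal is correct and takes essentially the same route as the paper: the paper's entire proof is the one-line observation that the statement follows from \thmref{thm:main} together with the $a_\infty\downarrow d_\infty$ branching rule of Theorem 3.2 of \cite{WangDuality}, which is exactly your skeleton. Your supplementary checks---that $d_\infty^+\subset a_\infty^+$ so \propref{singularvector} applies, that $\tilde v_0=\phi^D_{-\frac{3}{2}}\phi^D_{-\frac{1}{2}}|0\rangle$ is a $d_\infty$- but not $a_\infty$-highest weight vector of weight $^{\frak{dd}}\!\hat\Lambda_{det}$ (it is sent to $v_0$ by $r_1(E_{0,1})$, whose folded image in $d_\infty$ vanishes), and the length-parity argument assembling $\mathit{F_{\bar{0}}^{\ten \frac{1}{2}}}$ and $\mathit{F_{\bar{1}}^{\ten \frac{1}{2}}}$---are all accurate details that the paper leaves implicit in the citation.
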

Here, as in \cite{WangDuality}, the highest weights $^{\frak{dd}}\!\hat\Lambda_n$ are obtained from the restrictions of $^a\!\hat\Lambda_n$, except for $^{\frak{dd}}\!\hat\Lambda_{det}$ which is defined by
\begin{equation*}
 ^{\frak{dd}}\!\hat\Lambda_{det}(E_{i,i}- E_{1-i,1-i})=2; \quad \text{for} \ i=1,\quad  ^{\frak{dd}}\!\hat\Lambda_{det}(E_{i,i}- E_{1-i,1-i})=0\quad \text{for}\ i\neq 0, 1.
\end{equation*}

\begin{proof}
Follows directly from Theorem 3.2 of \cite{WangDuality} and Theorem \ref{thm:main}.
\end{proof}

\section{Appendix}

For readers who would like to see a more computational proof of \propref{irreducibleFn} we present one below.

{\it Alternate proof of \propref{irreducibleFn}}:
First we will prove that  the action of $a_{\infty}$ on $\mathit{F^{\ten \frac{1}{2}}}$ will preserve the $\mathbb{Z}$ grading $gd$; that will show that each $\mathit{F_{(n)}^{\ten \frac{1}{2}}}$ is a submodule for the $r_1$ action of $a_{\infty}$ on $\mathit{F^{\ten \frac{1}{2}}}$. Then we  will show that the submodule $\mathit{F_{(n)}^{\ten \frac{1}{2}}}$ is irreducible.

Let $v=\phi^D_{-n_k-\frac{1}{2}}\dots \phi^D_{-n_2-\frac{1}{2}}\phi^D_{-n_1-\frac{1}{2}}|0\rangle$ be any homogeneous vector in  $\mathit{F_{(n)}^{\ten \frac{1}{2}}}$. We have
\begin{align*}
r_1(E_{p, q})=:\phi^D_{-2p+\frac{1}{2}} \phi^D_{2q-\frac{1}{2}}:&=\begin{cases}\phi^D_{-2p+\frac{1}{2}} \phi^D_{2q-\frac{1}{2}},\quad \text{ unless} \  q\leq 0\ \text{and}\ 2q-\frac{1}{2}=-(-2p+ \frac{1}{2})  \\
-  \phi^D_{2q-\frac{1}{2}}\phi^D_{-2p+\frac{1}{2}},\quad \text{otherwise.  }
\end{cases}
\end{align*}
Consider first the case when $-(2q-\frac{1}{2})=-2p+ \frac{1}{2}$ and $q\leq 0$. Then $-2p+\frac{1}{2}>0$, and  in
\begin{equation*}
r_1(E_{p, q})v= -  \phi^D_{2q-\frac{1}{2}}\phi^D_{-2p+\frac{1}{2}}\phi^D_{-n_k-\frac{1}{2}}\dots \phi^D_{-n_2-\frac{1}{2}}\phi^D_{-n_1-\frac{1}{2}}|0\rangle
\end{equation*}
$\phi^D_{-2p+\frac{1}{2}}$ will either anticommute with all $\phi^D_{-n_s-\frac{1}{2}}$, $s=1, \dots , k$, in which case $r_1(E_{p, q})v=0$ as $\phi^D_{-2p+\frac{1}{2}}|0\rangle =0$; or otherwise we will have
$-2p+\frac{1}{2}=-(-n_s-\frac{1}{2})$, for some $s=1, \dots , k$. In that case we also have
$-n_s-\frac{1}{2}=2q-\frac{1}{2}$ and
\begin{align*}
&r_1(E_{p, q})v= -  \phi^D_{2q-\frac{1}{2}}\phi^D_{-2p+\frac{1}{2}}\phi^D_{-n_k-\frac{1}{2}}\dots \phi^D_{-n_s-\frac{1}{2}}\dots \phi^D_{-n_1-\frac{1}{2}}|0\rangle\\
&=\pm  \phi^D_{2q-\frac{1}{2}}\phi^D_{-n_k-\frac{1}{2}}\dots \widehat{\phi^D_{-n_s-\frac{1}{2}}}\dots \phi^D_{-n_1-\frac{1}{2}}|0\rangle\\
&=\pm \phi^D_{-n_k-\frac{1}{2}}\dots \phi^D_{-n_s-\frac{1}{2}}\dots \phi^D_{-n_1-\frac{1}{2}}|0\rangle.
\end{align*}
This shows that in both cases when $-(2q-\frac{1}{2})=-2p+ \frac{1}{2}$ we have if  $v\in \mathit{F_{(n)}^{\ten \frac{1}{2}}}$ then  $r_1(E_{p, q})v\in \mathit{F_{(n)}^{\ten \frac{1}{2}}}$.

Next, consider the case when $-(2q-\frac{1}{2})\neq -2p+ \frac{1}{2}$, but still $q\leq 0$, which implies again that $r_1(E_{p, q})=-  \phi^D_{2q-\frac{1}{2}}\phi^D_{-2p+\frac{1}{2}}$ (anticommutation) and again we consider the possible cases for
\begin{equation*}
r_1(E_{p, q})v= -  \phi^D_{2q-\frac{1}{2}}\phi^D_{-2p+\frac{1}{2}}\phi^D_{-n_k-\frac{1}{2}}\dots \phi^D_{-n_2-\frac{1}{2}}\phi^D_{-n_1-\frac{1}{2}}|0\rangle.
\end{equation*}
The first case is when $p\leq 0$ and $\phi^D_{-2p+\frac{1}{2}}$  anticommutes with all $\phi^D_{-n_s-\frac{1}{2}}$, $s=1, \dots , k$, in which case $r_1(E_{p, q})v=0$. The second case is again as above when $-2p+\frac{1}{2}=-(-n_s-\frac{1}{2})$, for some $s=1, \dots , k$. We again have
\begin{align*}
&r_1(E_{p, q})v= -  \phi^D_{2q-\frac{1}{2}}\phi^D_{-2p+\frac{1}{2}}\phi^D_{-n_k-\frac{1}{2}}\dots \phi^D_{-n_s-\frac{1}{2}}\dots \phi^D_{-n_1-\frac{1}{2}}|0\rangle\\
&=\pm  \phi^D_{2q-\frac{1}{2}}\phi^D_{-n_k-\frac{1}{2}}\dots \widehat{\phi^D_{-n_s-\frac{1}{2}}}\dots \phi^D_{-n_1-\frac{1}{2}}|0\rangle.
\end{align*}
In other words, we have ``removed" $\phi^D_{-n_s-\frac{1}{2}}$ with \textbf{even} index $n_s=2p$.
Now since $q<0$, we either have $2q-\frac{1}{2}=-n_{t}-\frac{1}{2}$ for some $t=1, \dots , k, t\neq s$ in which case again we have $r_1(E_{p, q})v=0$ as $\phi^D_{-2q-\frac{1}{2}} \phi^D_{-2q-\frac{1}{2}}=0$. Or otherwise we have ``added" $\phi^D_{-n_t-\frac{1}{2}}$ with \textbf{even} index $n_t=-2q$, which doesn't change the degree $dg$, as we have first ``removed" an even index  $n_s=2p$ and then ``added" an even index $n_t=-2q$. Finally, if we have both $q<0$ and $p\geq 1$, then
\begin{equation*}
r_1(E_{p, q})v= -  \phi^D_{2q-\frac{1}{2}}\phi^D_{-2p+1-\frac{1}{2}}\phi^D_{-n_k-\frac{1}{2}}\dots \phi^D_{-n_s-\frac{1}{2}}\dots \phi^D_{-n_1-\frac{1}{2}}|0\rangle;
\end{equation*}
i.e., we observe that the action $r_1(E_{p, q})v$  "adds" both an  \textbf{even} and an \textbf{odd} index to $v$, or annihilates it.
Thus in all cases when
$q<0$ we have if  $v\in \mathit{F_{(n)}^{\ten \frac{1}{2}}}$ then  $r_1(E_{p, q})v\in \mathit{F_{(n)}^{\ten \frac{1}{2}}}$.

Lastly, let $q\geq 1$.
Similar argument as above show that either $r_1(E_{p, q})v=0$, or
\begin{align*}
&r_1(E_{p, q})v=  \phi^D_{-2p+\frac{1}{2}} \phi^D_{2q-\frac{1}{2}} \phi^D_{-n_k-\frac{1}{2}}\dots \phi^D_{-n_t-\frac{1}{2}}\dots \phi^D_{-n_1-\frac{1}{2}}|0\rangle\\
&=\pm  \phi^D_{-2p+1-\frac{1}{2}}\phi^D_{-n_k-\frac{1}{2}}\dots \widehat{\phi^D_{-n_t-\frac{1}{2}}}\dots \phi^D_{-n_1-\frac{1}{2}}|0\rangle;
\end{align*}
where $2q-\frac{1}{2}=-(-n_{t}-\frac{1}{2})$ for some $t=1, \dots , k, t\neq s$. Thus  $n_t=2q-1$ and we have ``removed" an \textbf{odd} index $n_t$. Now if $-2p+1-\frac{1}{2}<0$, we are ``adding" back  an \textbf{odd} index $-2p+1$. If, on the other hand $-2p+1-\frac{1}{2}>0$, then either we get 0, or we remove also an  \textbf{even} index. Thus as a summary, in all cases we either remove an odd (even) index and add an odd (even) index back; we remove both an odd and an even index; or we get 0. Hence in all cases the action of $r_1(E_{p, q})$ will preserve the grading. Hence each $\mathit{F_{(n)}^{\ten \frac{1}{2}}}$ is a submodule for the representation $r_1$ of $a_{\infty}$ on $\mathit{F^{\ten \frac{1}{2}}}$.

Further, from the observations above we can summarize the action $r_1$ on $\mathit{F^{\ten \frac{1}{2}}}$ as follows. We have 3 nontrivial cases: in case 1 $r_1(E_{p, q})$ acting on a homogeneous vector $v$ ``\textbf{adds}" two factors   $\phi^D_{-n_{s_1}-\frac{1}{2}}$ and $\phi^D_{-n_{s_2}-\frac{1}{2}}$, so that one of the indexes $n_{s_1}$, $n_{s_2}$ is \textbf{even} , the other is \textbf{odd} (we will call it for short ``adding an even and an odd index"). In case 2, we \textbf{replace} a factor $\phi^D_{-n_{s_1}-\frac{1}{2}}$ with another factor $\phi^D_{-n_{s_2}-\frac{1}{2}}$, where either both factors $n_{s_1}$, $n_{s_2}$ are \textbf{even}, or both factors $n_{s_1}$, $n_{s_2}$ are \textbf{odd} (we will call it for short ``replacing even with even index" and ``replacing odd with odd index"). And case 3 is when we ``\textbf{remove}" two factors   $\phi^D_{-n_{s_1}-\frac{1}{2}}$ and $\phi^D_{-n_{s_2}-\frac{1}{2}}$, so that one of the indexes $n_{s_1}$, $n_{s_2}$ is \textbf{even} , the other is \textbf{odd} (``removing an even and an odd index"). Note that ``adding an index" that is already present, or ``removing an index" that was absent, will of course produce the 0 vector.

Now we want to prove that for each $n\in \mathbb{Z}$ $\mathit{F_{(n)}^{\ten \frac{1}{2}}}$ is an irreducible module for $a_{\infty}$. This will be done in  two steps. The first step is to prove that each vector in $\mathit{F_{(n)}^{\ten \frac{1}{2}}}$ can be generated from the "$n$-th vacuum vector" $v_n$, i.e., $\mathit{F_{(n)}^{\ten \frac{1}{2}}}=U(a_{\infty}^-)v_n$. This was  done in Proposition \ref{generating}. The second step is to prove that for any vector $v\in \mathit{F_{(n)}^{\ten \frac{1}{2}}}$, we have $v_n\in U(a_{\infty})v$.
Let then  $v\in \mathit{F_{(n)}^{\ten \frac{1}{2}}}$ be any vector, not necessary homogeneous:
 $v=\sum _{hk} c_{hk}v^{hk}$, where $v^{hk}$ are homogeneous vectors, $v^{hk}= \phi^D_{-n_k-\frac{1}{2}}\dots \phi^D_{-n_2-\frac{1}{2}}\phi^D_{-n_1-\frac{1}{2}}|0\rangle$, $c_k\in \mathbb{C}$.
 It is clear that by the operation ``adding an even and an odd index" we can reduce $v$ only to a linear combination of homogeneous vectors with minimal possible length $\tilde{L}$ among the $v^{hk}$ by the following two steps: we would ``add an even and an odd index" starting from the already existing indexes in the vector $v^{hk}$ with the largest length $\tilde{L}$ (which will annihilate it), and then we would remove the same combination back (which will bring the remaining vectors with lower length back to their original length). This is always possible, as any two  lengths within $\mathit{F_{(n)}^{\ten \frac{1}{2}}}$ differ always by an even number; and we always have at least two differing elements $\phi^D_{-n_k-\frac{1}{2}}$  between the homogeneous vectors with the two consecutive lengths, as well as two different elements $\phi^D_{-n_k-\frac{1}{2}}$  between homogeneous vectors with same lengths. Now among the remaining different vectors $v^{hk}$ with minimal possible length $\tilde{L}$ we can use the operations of ``replacing even index with even" and ``replacing odd index with odd" until only a single  homogeneous vector  $v^{hk}$ with largest index $n_k$ the  minimal possible remains. We will show how this algorithm on an example of  a vector $v\in \mathit{F_{(0)}^{\ten \frac{1}{2}}}$ which will illustrate these 3 steps,
 \begin{align*}
 v&=\phi^D_{-7-\frac{1}{2}}\phi^D_{-4-\frac{1}{2}}\phi^D_{-3-\frac{1}{2}}\phi^D_{-2-\frac{1}{2}}|0\rangle + \phi^D_{-5-\frac{1}{2}}\phi^D_{-3-\frac{1}{2}}\phi^D_{-2-\frac{1}{2}}\phi^D_{-0-\frac{1}{2}}|0\rangle \\ &\quad  +\phi^D_{-8-\frac{1}{2}}\phi^D_{-5-\frac{1}{2}}|0\rangle + \phi^D_{-6-\frac{1}{2}}\phi^D_{-1-\frac{1}{2}}|0\rangle +\phi^D_{-5-\frac{1}{2}}\phi^D_{-4-\frac{1}{2}}|0\rangle +\phi^D_{-9-\frac{1}{2}}\phi^D_{-8-\frac{1}{2}}|0\rangle.
 \end{align*}
 We have Step 1:
\begin{align*}
 E_{4, -2}v&=\phi^D_{-7-\frac{1}{2}}\phi^D_{-4-\frac{1}{2}}v= \phi^D_{-7-\frac{1}{2}}\phi^D_{-4-\frac{1}{2}}\phi^D_{-7-\frac{1}{2}}\phi^D_{-4-\frac{1}{2}}\phi^D_{-3-\frac{1}{2}}\phi^D_{-2-\frac{1}{2}}|0\rangle \\
 &\quad + \phi^D_{-7-\frac{1}{2}}\phi^D_{-4-\frac{1}{2}}\phi^D_{-5-\frac{1}{2}}\phi^D_{-3-\frac{1}{2}}\phi^D_{-2-\frac{1}{2}}\phi^D_{-0-\frac{1}{2}}|0\rangle \\ &\quad  +\phi^D_{-7-\frac{1}{2}}\phi^D_{-4-\frac{1}{2}}\phi^D_{-8-\frac{1}{2}}\phi^D_{-5-\frac{1}{2}}|0\rangle + \phi^D_{-7-\frac{1}{2}}\phi^D_{-4-\frac{1}{2}}\phi^D_{-6-\frac{1}{2}}\phi^D_{-1-\frac{1}{2}}|0\rangle \\ &\quad +\phi^D_{-7-\frac{1}{2}}\phi^D_{-4-\frac{1}{2}}\phi^D_{-5-\frac{1}{2}}\phi^D_{-4-\frac{1}{2}}|0\rangle +\phi^D_{-7-\frac{1}{2}}\phi^D_{-4-\frac{1}{2}}\phi^D_{-9-\frac{1}{2}}\phi^D_{-8-\frac{1}{2}}|0\rangle\\
&=\phi^D_{-7-\frac{1}{2}}\phi^D_{-4-\frac{1}{2}}\phi^D_{-5-\frac{1}{2}}\phi^D_{-3-\frac{1}{2}}\phi^D_{-2-\frac{1}{2}}\phi^D_{-0-\frac{1}{2}}|0\rangle
\\
&\quad + \phi^D_{-7-\frac{1}{2}}\phi^D_{-4-\frac{1}{2}}\phi^D_{-8-\frac{1}{2}}\phi^D_{-5-\frac{1}{2}}|0\rangle + \phi^D_{-7-\frac{1}{2}}\phi^D_{-4-\frac{1}{2}}\phi^D_{-6-\frac{1}{2}}\phi^D_{-1-\frac{1}{2}}|0\rangle\\
&\quad +\phi^D_{-7-\frac{1}{2}}\phi^D_{-4-\frac{1}{2}}\phi^D_{-9-\frac{1}{2}}\phi^D_{-8-\frac{1}{2}}|0\rangle;
 \end{align*}
Step 2:
\begin{align*}
 E_{-2, 4} E_{4, -2}v& =\phi^D_{4+\frac{1}{2}}\phi^D_{7+\frac{1}{2}}\phi^D_{-7-\frac{1}{2}}\phi^D_{-4-\frac{1}{2}}v\\
 &=\phi^D_{4+\frac{1}{2}}\phi^D_{7+\frac{1}{2}}\phi^D_{-7-\frac{1}{2}}\phi^D_{-4-\frac{1}{2}}\phi^D_{-5-\frac{1}{2}}\phi^D_{-3-\frac{1}{2}}\phi^D_{-2-\frac{1}{2}}\phi^D_{-0-\frac{1}{2}}|0\rangle
\\
&\quad + \phi^D_{4+\frac{1}{2}}\phi^D_{7+\frac{1}{2}}\phi^D_{-7-\frac{1}{2}}\phi^D_{-4-\frac{1}{2}}\phi^D_{-8-\frac{1}{2}}\phi^D_{-5-\frac{1}{2}}|0\rangle \\
&\quad +\phi^D_{4+\frac{1}{2}}\phi^D_{7+\frac{1}{2}}\phi^D_{-7-\frac{1}{2}}\phi^D_{-4-\frac{1}{2}}\phi^D_{-6-\frac{1}{2}}\phi^D_{-1-\frac{1}{2}}|0\rangle\\
&\quad +\phi^D_{4+\frac{1}{2}}\phi^D_{7+\frac{1}{2}}\phi^D_{-7-\frac{1}{2}}\phi^D_{-4-\frac{1}{2}}\phi^D_{-9-\frac{1}{2}}\phi^D_{-8-\frac{1}{2}}|0\rangle\\
&=\phi^D_{-5-\frac{1}{2}}\phi^D_{-3-\frac{1}{2}}\phi^D_{-2-\frac{1}{2}}\phi^D_{-0-\frac{1}{2}}|0\rangle \\ &\quad +\phi^D_{-8-\frac{1}{2}}\phi^D_{-5-\frac{1}{2}}|0\rangle
 + \phi^D_{-6-\frac{1}{2}}\phi^D_{-1-\frac{1}{2}}|0\rangle+\phi^D_{-9-\frac{1}{2}}\phi^D_{-8-\frac{1}{2}}|0\rangle;
 \end{align*}
We repeat Step 1:
\begin{align*}
 E_{3, -1}E_{-2, 4} E_{4, -2}v& =\phi^D_{-5-\frac{1}{2}}\phi^D_{-2-\frac{1}{2}}E_{-2, 4} E_{4, -2}v\\
 &=\phi^D_{-5-\frac{1}{2}}\phi^D_{-2-\frac{1}{2}}\phi^D_{-5-\frac{1}{2}}\phi^D_{-3-\frac{1}{2}}\phi^D_{-2-\frac{1}{2}}\phi^D_{-0-\frac{1}{2}}|0\rangle \\
 &\quad +\phi^D_{-5-\frac{1}{2}}\phi^D_{-2-\frac{1}{2}}\phi^D_{-8-\frac{1}{2}}\phi^D_{-5-\frac{1}{2}}|0\rangle +\phi^D_{-5-\frac{1}{2}}\phi^D_{-2-\frac{1}{2}}\phi^D_{-6-\frac{1}{2}}\phi^D_{-1-\frac{1}{2}}|0\rangle\\
 &\quad +\phi^D_{-5-\frac{1}{2}}\phi^D_{-2-\frac{1}{2}}\phi^D_{-9-\frac{1}{2}}\phi^D_{-8-\frac{1}{2}}|0\rangle\\
 &=\phi^D_{-5-\frac{1}{2}}\phi^D_{-2-\frac{1}{2}}\phi^D_{-6-\frac{1}{2}}\phi^D_{-1-\frac{1}{2}}|0\rangle +\phi^D_{-5-\frac{1}{2}}\phi^D_{-2-\frac{1}{2}}\phi^D_{-9-\frac{1}{2}}\phi^D_{-8-\frac{1}{2}}|0\rangle.
 \end{align*}
 We repeat Step 2:
 \begin{equation*}
 E_{-1, 3}E_{3, -1}E_{-2, 4} E_{4, -2}v=\phi^D_{-6-\frac{1}{2}}\phi^D_{-1-\frac{1}{2}}|0\rangle +\phi^D_{-9-\frac{1}{2}}\phi^D_{-8-\frac{1}{2}}|0\rangle.
 \end{equation*}
 Finally Step 3:
  \begin{align*}
 E_{1, 5}E_{-1, 3}E_{3, -1}E_{-2, 4} E_{4, -2}v&=\phi^D_{-1-\frac{1}{2}}\phi^D_{9+\frac{1}{2}}\phi^D_{-6-\frac{1}{2}}\phi^D_{-1-\frac{1}{2}}|0\rangle +\phi^D_{-1-\frac{1}{2}}\phi^D_{9+\frac{1}{2}}\phi^D_{-9-\frac{1}{2}}\phi^D_{-8-\frac{1}{2}}|0\rangle\\
 &=\phi^D_{-8-\frac{1}{2}}\phi^D_{-1-\frac{1}{2}}|0\rangle.
 \end{align*}
 Hence similarly we can reduce any  potentially nonhomogeneous vector $v$ we started with to a homogeneous vector $v^{hk}$ by successive  action of  $r_1(E_{pq})$ with various $p, q\in \mathbb{Z}$. Thus we now need only consider a homogeneous vector $v^{hk}=\phi^D_{-n_k-\frac{1}{2}}\dots \phi^D_{-n_2-\frac{1}{2}}\phi^D_{-n_1-\frac{1}{2}}|0\rangle$.
 If $v^{hk}$ has length $\tilde{L}(v^{hk})>|n|$, then $\tilde{L}(v^{hk})=|n|+2l$ and we can use the the operation of ``removing an even and an odd index" $l$ times in succession until we get a vector of minimal length $\tilde{L}=|n|$.
 After that we just have to eventually ``replace some even indexes with even" and ``replace some odd indexes with odd" to produce the highest weight vector $v_n$.
 Hence, we have proved that $v_n\in U(a_{\infty})v$ for any $v\in \mathit{F_{(n)}^{\ten \frac{1}{2}}}$, which since $\mathit{F_{(n)}^{\ten \frac{1}{2}}}=U(a_{\infty}^-)v_n$ proves that $\mathit{F_{(n)}^{\ten \frac{1}{2}}}$ is an irreducible highest weight module for $a_{\infty}$. The highest weights are calculated in Lemma \ref{highestweights}, and that proves  \propref{irreducibleFn}.
$ \square $

\medskip

\def\cprime{$'$}


\begin{thebibliography}{DJKM81b}

\bibitem[Ang12]{AngTVA}
Iana~I. Anguelova.
\newblock Twisted vertex algebras, bicharacter construction and boson-fermion
  correspondences.
\newblock  	arXiv:1206.4026, submitted, 2012.

\bibitem[Ang13]{Ang-Varna2}
Iana~I. Anguelova.
\newblock Boson-fermion correspondence of type {B} and twisted vertex algebras.
\newblock In {\em Proceedings of the 9-th International Workshop "Lie Theory
  and Its Applications in Physics" (LT-9), Varna, Bulgaria}, Springer
  Proceedings in Mathematics and Statistics, 2013.


\bibitem[ACJ13]{ACJ}
Iana~I. Anguelova, Ben Cox, and Elizabeth Jurisich.
\newblock ${N}$-point locality for vertex operators: normal ordered products,
  operator product expansions, twisted vertex algebras.
\newblock arXiv:1307.4830, submitted, 2013.



\bibitem[{Bar}11]{Katrina}
Katrina ~{Barron}.
\newblock {Twisted modules for N=2 supersymmetric vertex operator superalgebras
  arising from finite automorphisms of the N=2 Neveu-Schwarz algebra}.
\newblock  	arXiv:1110.0229,  2011.

\bibitem[BS83]{MR85g:81096}
N.~N. Bogoliubov and D.~V. Shirkov.
\newblock {\em Quantum fields}.
\newblock Benjamin/Cummings Publishing Co. Inc. Advanced Book Program, Reading,
  MA, 1983.
\newblock Translated from the Russian by D. B. Pontecorvo.


\bibitem[DJKM81a]{DJKM3}
Etsur{\=o} Date, Michio Jimbo, Masaki Kashiwara, and Tetsuji Miwa.
\newblock Transformation groups for soliton equations. {III}. {O}perator
  approach to the {K}adomtsev-{P}etviashvili equation.
\newblock {\em J. Phys. Soc. Japan}, 50(11):3806--3812, 1981.

\bibitem[DJKM81b]{DJKM6}
Etsur{\=o} Date, Michio Jimbo, Masaki Kashiwara, and Tetsuji Miwa.
\newblock Transformation groups for soliton equations. {VI}. {KP} hierarchies
  of orthogonal and symplectic type.
\newblock {\em J. Phys. Soc. Japan}, 50(11):3813--3818, 1981.


\bibitem[DKM81]{DJKM-1}
Etsur{\=o} Date, Masaki Kashiwara, and Tetsuji Miwa.
\newblock Transformation groups for soliton equations. {II}. {V}ertex operators
  and {$\tau $} functions.
\newblock {\em Proc. Japan Acad. Ser. A Math. Sci.}, 57(8):387--392, 1981.

\bibitem[FBZ04]{FZvi}
Edward Frenkel and David Ben-Zvi.
\newblock {\em Vertex algebras and algebraic curves}, volume~88 of {\em
  Mathematical Surveys and Monographs}.
\newblock American Mathematical Society, Providence, RI, second edition, 2004.

\bibitem[FFR91]{Triality}
Alex~J. Feingold, Igor~B. Frenkel, and John F.~X. Ries.
\newblock {\em Spinor construction of vertex operator algebras, triality, and
  {$E^{(1)}_8$}}, volume 121 of {\em Contemporary Mathematics}.
\newblock American Mathematical Society, Providence, RI, 1991.

\bibitem[Fre81]{Frenkel-BF}
Igor~B. Frenkel.
\newblock Two constructions of affine {L}ie algebra representations and
  boson-fermion correspondence in quantum field theory.
\newblock {\em J. Funct. Anal.}, 44(3):259--327, 1981.



\bibitem[FHL93]{FHL}
Igor~B. Frenkel, Yi-Zhi Huang, and James Lepowsky.
\newblock On axiomatic approaches to vertex operator algebras and modules.
\newblock {\em Mem. Amer. Math. Soc.}, 104(494):viii+64, 1993.

\bibitem[FLM88]{FLM}
Igor~B. Frenkel, James Lepowsky, and Arne Meurman.
\newblock {\em Vertex operator algebras and the {M}onster}, volume 134 of {\em
  Pure and Applied Mathematics}.
\newblock Academic Press Inc., Boston, MA, 1988.


\bibitem[Hua98]{MR99m:81001}
Kerson Huang.
\newblock {\em Quantum field theory}.
\newblock John Wiley \& Sons Inc., New York, 1998.
\newblock From operators to path integrals.

\bibitem[Jan79]{Jantzen}
Jens~Carsten Jantzen.
\newblock {\em Moduln mit einem h\"ochsten {G}ewicht}, volume 750 of {\em
  Lecture Notes in Mathematics}.
\newblock Springer, Berlin, 1979.

\bibitem[Kac90]{Kac-Lie}
Victor Kac.
\newblock {\em Infinite-dimensional {L}ie algebras}.
\newblock Cambridge University Press, Cambridge, third edition, 1990.

\bibitem[Kac98]{Kac}
Victor Kac.
\newblock {\em Vertex algebras for beginners}, volume~10 of {\em University
  Lecture Series}.
\newblock American Mathematical Society, Providence, RI, second edition, 1998.

\bibitem[KR87]{KacRaina}
Victor Kac and A.~K. Raina.
\newblock {\em Bombay lectures on highest weight representations of
  infinite-dimensional {L}ie algebras}, volume~2 of {\em Advanced Series in
  Mathematical Physics}.
\newblock World Scientific Publishing Co. Inc., Teaneck, NJ, 1987.

\bibitem[KW94]{Wang}
Victor Kac and Weiqiang Wang.
\newblock Vertex operator superalgebras and their representations.
\newblock In {\em Mathematical aspects of conformal and topological field
  theories and quantum groups ({S}outh {H}adley, {MA}, 1992)}, vol. 175 of
  {\em Contemp. Math.}, pages 161--191. Amer. Math. Soc., Providence, RI, 1994.

\bibitem[KWY98]{WangKac}
Victor Kac, Weiqiang Wang, and Catherine~H. Yan.
\newblock Quasifinite representations of classical {L}ie subalgebras of {$
  W_{1+\infty}$}.
\newblock {\em Adv. Math.}, 139(1):56--140, 1998.

\bibitem[LL04]{LiLep}
James Lepowsky and Haisheng Li.
\newblock {\em Introduction to vertex operator algebras and their
  representations}, vol. 227 of {\em Progress in Mathematics}.
\newblock Birkh\"auser Boston Inc., Boston, MA, 2004.

\bibitem[MP95]{Moody}
Robert~V. Moody and Arturo Pianzola.
\newblock {\em Lie algebras with triangular decompositions}.
\newblock Canadian Mathematical Society Series of Monographs and Advanced
  Texts. John Wiley \& Sons Inc., New York, 1995.
\newblock A Wiley-Interscience Publication.

\bibitem[MTZ08]{MasonTuiteZ}
Geoffrey Mason, Michael~P. Tuite, and Alexander Zuevsky.
\newblock Torus {$n$}-point functions for {$\Bbb R$}-graded vertex operator
  superalgebras and continuous fermion orbifolds.
\newblock {\em Comm. Math. Phys.}, 283(2):305--342, 2008.

\bibitem[Wan99a]{WangDual}
Weiqiang Wang.
\newblock Dual pairs and infinite dimensional {L}ie algebras.
\newblock In {\em Recent developments in quantum affine algebras and related
  topics ({R}aleigh, {NC}, 1998)}, vol. 248 of {\em Contemp. Math.}, pages
  453--469. Amer. Math. Soc., Providence, RI, 1999.

\bibitem[Wan99b]{WangDuality}
Weiqiang Wang.
\newblock Duality in infinite-dimensional {F}ock representations.
\newblock {\em Commun. Contemp. Math.}, 1(2):155--199, 1999.

\end{thebibliography}

 \end{document}